\newtheorem{thm}{Theorem}
\newtheorem{lem}[thm]{Lemma}%
\newtheorem{prop}[thm]{Proposition}%
\theoremstyle{remark}
\theoremstyle{plain}
\numberwithin{equation}{section}
\def\NN{{\mathbb N}}
\def\PP{{\mathbb P}}
\def\QQ{{\mathbb Q}}
\def\RR{{\mathbb R}}
\def\ZZ{{\mathbb Z}}
\def\one{{\mathbbm{1}}}
\def\vecb{{\text{\boldmath$b$}}}
\def\vece{{\text{\boldmath$e$}}}
\def\vecq{{\text{\boldmath$q$}}}
\def\vecs{{\text{\boldmath$s$}}}
\def\vecv{{\text{\boldmath$v$}}}
\def\vecw{{\text{\boldmath$w$}}}
\def\vecx{{\text{\boldmath$x$}}}
\def\vecy{{\text{\boldmath$y$}}}
\def\vecz{{\text{\boldmath$z$}}}
\def\vecalf{{\text{\boldmath$\alpha$}}}
\def\vecbeta{{\text{\boldmath$\beta$}}}
\def\vecomega{{\text{\boldmath$\omega$}}}
\def\UB{{\scrB_1^{d-1}}}
\def\US{{\S_1^{d-1}}}
\def\vecnull{{\text{\boldmath$0$}}}
\def\scrA{{\mathcal A}}
\def\scrB{{\mathcal B}}
\def\scrD{{\mathcal D}}
\def\scrG{{\mathcal G}}
\def\scrH{{\mathcal H}}
\def\scrK{{\mathcal K}}
\def\scrL{{\mathcal L}}
\def\scrP{{\mathcal P}}
\def\scrS{{\mathcal S}}
\def\fU{{\mathfrak U}}
\def\gap{\operatorname{gap}}
\def\e{\mathrm{e}}
\def\GL{\operatorname{GL}}
\def\S{\operatorname{S{}}}
\def\SL{\operatorname{SL}}
\def\ASL{\operatorname{ASL}}
\def\SO{\operatorname{SO}}
\def\O{\operatorname{O{}}}
\def\T{\operatorname{T{}}}
\def\vol{\operatorname{vol}}
\def\Area{\operatorname{Area}}
\def\GamG{\Gamma\backslash G}
\def\ASLZ{\ASL(d,\ZZ)}
\def\ASLR{\ASL(d,\RR)}
\def\SLZ{\SL(d,\ZZ)}
\def\SLR{\SL(d,\RR)}
\def\trans{\,^\mathrm{t}\!}
\def\sigmabar{\overline{\sigma}}
\def\eps{\epsilon}
\def\Onder#1#2#3#4#5{#1 \setbox0=\hbox{$#1$}\setbox1=\hbox{$#2$}
       \dimen0=.5\wd0 \dimen1=\dimen0 \dimen2=\dp0 \dimen3=\dimen2
       \advance\dimen0 by .5\wd1 \advance\dimen0 by -#4
       \advance\dimen1 by -.5\wd1 \advance\dimen1 by -#4
       \advance\dimen2 by -#3 \advance\dimen2 by \ht1
       \advance\dimen2 by 0.3ex \advance\dimen3 by #5
        \kern-\dimen0\raisebox{-\dimen2}[0ex][\dimen3]{\box1}
       \kern\dimen1}
\newcommand{\wD}{\widetilde{\scrD}}
\newcommand{\Q}{\mathbb{Q}}
\newcommand{\R}{\mathbb{R}}
\newcommand{\Z}{\mathbb{Z}}
\newcommand{\HS}{{{\S'_1}^{d-1}}}
\newcommand{\sfrac}[2]{{\textstyle \frac {#1}{#2}}}
\newcommand{\col}{\: : \:}
\newcommand{\bn}{\mathbf{0}}
\title[Particle transport in polycrystals]{Generalized linear Boltzmann equations for particle transport in polycrystals}
\author{Jens Marklof}
\author{Andreas Str\"ombergsson}
\address{School of Mathematics, University of Bristol,
Bristol BS8 1TW, U.K.\newline
\rule[0ex]{0ex}{0ex} \hspace{8pt}{\tt j.marklof@bristol.ac.uk}}
\address{Department of Mathematics, Box 480, Uppsala University,
SE-75106 Uppsala, Sweden\newline
\rule[0ex]{0ex}{0ex} \hspace{8pt}{\tt astrombe@math.uu.se}}
\date{\today}
\thanks{The research leading to these results has received funding from the European Research Council under the European Union's Seventh Framework Programme (FP/2007-2013) / ERC Grant Agreement n. 291147.  
J.M.\ thanks the Isaac Newton Institute, Cambridge for its support and hospitality during the semester ``Periodic and Ergodic Spectral Problems.''
A.S.\ is supported by a grant from the G\"oran Gustafsson Foundation for
Research in Natural Sciences and Medicine, and also by the Swedish Research Council Grant 621-2011-3629.}
\begin{document}


\maketitle

\noindent
The linear Boltzmann equation describes the macroscopic transport of a gas of non-interacting point particles in low-density matter. It has wide-ranging applications, including neutron transport, radiative transfer, semiconductors and ocean wave scattering. Recent research shows that the equation fails in highly-correlated media, where the distribution of free path lengths is non-exponential. We investigate this phenomenon in the case of polycrystals whose typical grain size is comparable to the mean free path length. Our principal result is a new generalized linear Boltzmann equation that captures the long-range memory effects in this setting. A key feature is that the distribution of free path lengths has an exponential decay rate, as opposed to a power-law distribution observed in a single crystal.

\section{Introduction}\label{sec:intro}

The Lorentz gas, introduced by Lorentz in the early 1900s \cite{Lorentz05} to model electron transport in metals, has become one of the most prominent objects in non-equilibrium statistical mechanics. It describes a gas of non-interacting point particles in an infinite array of identical spherical scatterers. Lorentz showed, by adapting Boltzmann's classical heuristics for the hard sphere gas, that in the limit of low scatterer density (Boltzmann-Grad limit) the evolution of a macroscopic particle cloud is described by the linear Boltzmann equation. A rigorous derivation of the linear Boltzmann equation from the microscopic dynamics has been given in the seminal papers by Gallavotti  \cite{Gallavotti69}, Spohn  \cite{Spohn78} and Boldrighini, Bunimovich and Sinai \cite{Boldrighini83}, under the assumption that the scatterer configuration is sufficiently disordered. 

For scatterer configurations with long-range correlations, such as crystals or quasicrystals, the linear Boltzmann equation fails and must be replaced by a more general transport equation that takes into account additional memory effects. The failure of the linear Boltzmann equation was first pointed out by Golse \cite{GolseToulouse2008} for periodic scatterer configurations. We subsequently provided a complete microscopic derivation of the correct transport equation in this setting \cite{partII}. An important characteristic is here that the distribution of free path lengths has a power-law tail with diverging second moment \cite{Bourgain98,Caglioti03,Boca07,partIV} and the long-time limit of the transport problem is superdiffusive \cite{super}. Surveys of these and other recent advances on the microscopic justification of generalized Boltzmann equations can be found in \cite{GolseToulouse2008,icmp,ICM2014}. 

Independent of these developments, Larsen has recently proposed a stationary generalized linear Boltzmann equation for homogeneous media with non-exponential path length distribution, non-elastic scattering and additional source terms. We refer the reader to Larsen and Vasques \cite{Larsen11,Vasques14a,Vasques14b} and Frank and Goudon \cite{Frank10} for more details and applications. Prompted by a question of Larsen, the present paper aims to generalize our findings for  periodic scatterer configurations \cite{partIII,partI,partII,partIV} to polycrystals. 

The approach of this study combines the methods of \cite{partI,partII} with equidistribution theorems from \cite{union}, which were originally developed to analyse unions of incommensurable lattices. We will argue that, in the Boltzmann-Grad limit, the time evolution of a particle cloud is governed by the transport equation  
\begin{equation}\label{glB220}
	\big[ \partial_t + \vecv\cdot\nabla_\vecx - \partial_\xi \big] f_t(\vecx,\vecv,\xi,\vecv_+) \\
	= \int_{\S_1^{d-1}}  f_t(\vecx,\vecv_0,0,\vecv) \,
p_\vecnull(\vecv_0,\vecx,\vecv,\xi,\vecv_+) \,
d\vecv_0 
\end{equation}
subject to the initial condition
\begin{equation}\label{ini0}
	\lim_{t\to 0}f_t(\vecx,\vecv,\xi,\vecv_+) = f_0(\vecx,\vecv)\, p(\vecx,\vecv,\xi,\vecv_+) ,
\end{equation}
where $f_0(\vecx,\vecv)$ is the particle density in phase space at time $t=0$ and $p(\vecx,\vecv,\xi,\vecv_+)$ is a stationary solution of \eqref{glB220}. The variables $\xi$ and $\vecv_+$ represent the distance to the next collision and the velocity thereafter. By adapting our techniques for single crystals \cite{partI}, we will compute the collision kernel $p_\vecnull(\vecv_0,\vecx,\vecv,\xi,\vecv_+)$ in terms of the corresponding kernel of each individual grain. The kernel yields the conditional probability measure
\begin{equation}
p_\vecnull(\vecv_0,\vecx,\vecv,\xi,\vecv_+) \, d\xi\, d\vecv_+
\end{equation}
for the distribution of $(\xi,\vecv_+)\in\RR_{>0}\times \US$ conditional on $\vecv_0,\vecx,\vecv$.
If the grain diameters are sufficiently small on the scale of the mean free path length, the collision kernel has an explicit representation in terms of elementary functions. This yields particularly simple formulas in dimension $d=2$. The necessity of extending the phase space  has already been observed in the case of a single crystal \cite{Caglioti10,partII}, finite unions \cite{union} and in quasicrystals \cite{quasi,quasikinetic}, where the collision kernel is independent of $\vecx$. If the scatterer configuration is disordered and no long-range correlations are present, the dynamics reduces in the Boltzmann-Grad limit to the classical linear Boltzmann equation \cite{Boldrighini83,Gallavotti69,Spohn78}.

The generalized linear Boltzmann equation \eqref{glB220} can be understood as the Fokker-Planck-Kolmorgorov equation (backward Kolomogorov equation) of the following Markovian random flight process: Consider a test particle travelling with constant speed along the random trajectory
\begin{equation}\label{xt}
\vecx(t) = \vecx_{\nu_t} + (t-T_{\nu_t})\vecv_{\nu_t} , \qquad \vecx(0)=\vecx_0, \qquad
\vecv(t)=\vecv_{\nu_t} , \qquad \vecv(0)=\vecv_0,
\end{equation}
where 
\begin{equation}\label{xnqn}
\vecx_n = \vecx_0+\vecq_n,\qquad \vecq_n = \sum_{j=1}^{n} \vecv_{j-1} \xi_j ,\qquad T_n := \sum_{j=1}^n \xi_j, \qquad T_0:=0,
\end{equation}
are the location, displacement and time of the $n$th collision, $\vecv_n$ the velocity after the $n$th collision, and
\begin{equation}\label{xz}
\nu_t := \max\{ n\in\ZZ_{\geq 0} : T_n \leq t \}
\end{equation}
is the number of collisions within time $t$. The above process is determined by the sequence of random variables $(\xi_j,\vecv_j)_{j\in\NN}$ and $(\vecx_0,\vecv_0)$, where $(\vecx_0,\vecv_0,\xi_1,\vecv_1)$ is distributed according to
\begin{equation}
f_0(\vecx_0,\vecv_0)\, p(\vecx_0,\vecv_0,\xi_1,\vecv_1) \,d\vecx_0\,d\vecv_0\,d\xi_1\,d\vecv_1,
\end{equation}
with $f_0$ now being an arbitrary probability density,
and $(\xi_n,\vecv_n)$ is distributed according to
\begin{equation}
p_\vecnull(\vecv_{n-2},\vecx_{n-1},\vecv_{n-1},\xi_{n},\vecv_{n})\, d\xi_{n}\,d\vecv_{n},
\end{equation}
conditional on $(\xi_j,\vecv_j)_{j=1}^{n-1}$ and $(\vecx_0,\vecv_0)$.

{\em Acknowledgements.} JM would like to thank Martin Frank, Kai Krycki and Edward Larsen for the stimulating discussions during his visit to RWTH Aachen in June 2014, and in particular Edward Larsen for suggesting the problem of transport in polycrystals. We thank Dave Rowenhorst for providing us with the image in Fig.~\ref{fig1}.

\section{The setting}\label{sec:setting}

Let $\{\scrG_i\}_i$ be a countable collection of non-overlapping convex open domains in $\RR^d$, and $\{\scrL_i\}_i$ a collection of affine lattices of covolume one. We can write each such affine lattice as $\scrL_i=(\Z^d+\vecomega_i) M_i$ with row vector $\vecomega_i\in\R^d$ and matrix $M_i\in\SL(d,\R)$. We define a {\em polylattice} $\scrP_\epsilon$ as the point set 
\begin{equation}\label{PS}
\scrP_\epsilon = \bigcup_i \big(\scrG_i \cap \epsilon \scrL_i\big),
\end{equation}
where $\epsilon>0$ is a scaling parameter. We refer to $\scrG_i$ as a {\em grain} of $\scrP_\epsilon$.
An example for $\{\scrG_i\}_i$ is a collection of convex polyhedra that tesselate $\RR^d$, i.e., $\cup_i\overline{\scrG_i}=\RR^d$. In general we will, however, allow gaps between grains. The standing assumption in this paper is that the number of grains intersecting any bounded subset of $\R^d$ is finite. The assumption that grains are convex will allow us to ignore correlations of trajectories that re-enter the same grain without intermediate scattering. It is not difficult to extend the present analysis to include these effects. The assumption that all lattices have the same covolume is made solely to simplify the presentation and can easily be removed. Figure \ref{fig1}, reproduced from \cite{Rowenhorst10}, shows the grains of an actual polycrystal sample, the $\beta$-titanium alloy Ti--21S. The $\beta$-form of titanium has a body-centered cubic lattice, which can be represented as the linear deformation $\ZZ^3 M$ of the cubic lattice $\ZZ^3$, where
\begin{equation}\label{BCC}
M=\begin{pmatrix} 
2^{1/3} & 0 & 0 \\
0 & 2^{1/3} & 0 \\
2^{-2/3} & 2^{-2/3} & 2^{-2/3} 
\end{pmatrix} .
\end{equation}

\begin{figure}
\includegraphics[width=0.8\textwidth]{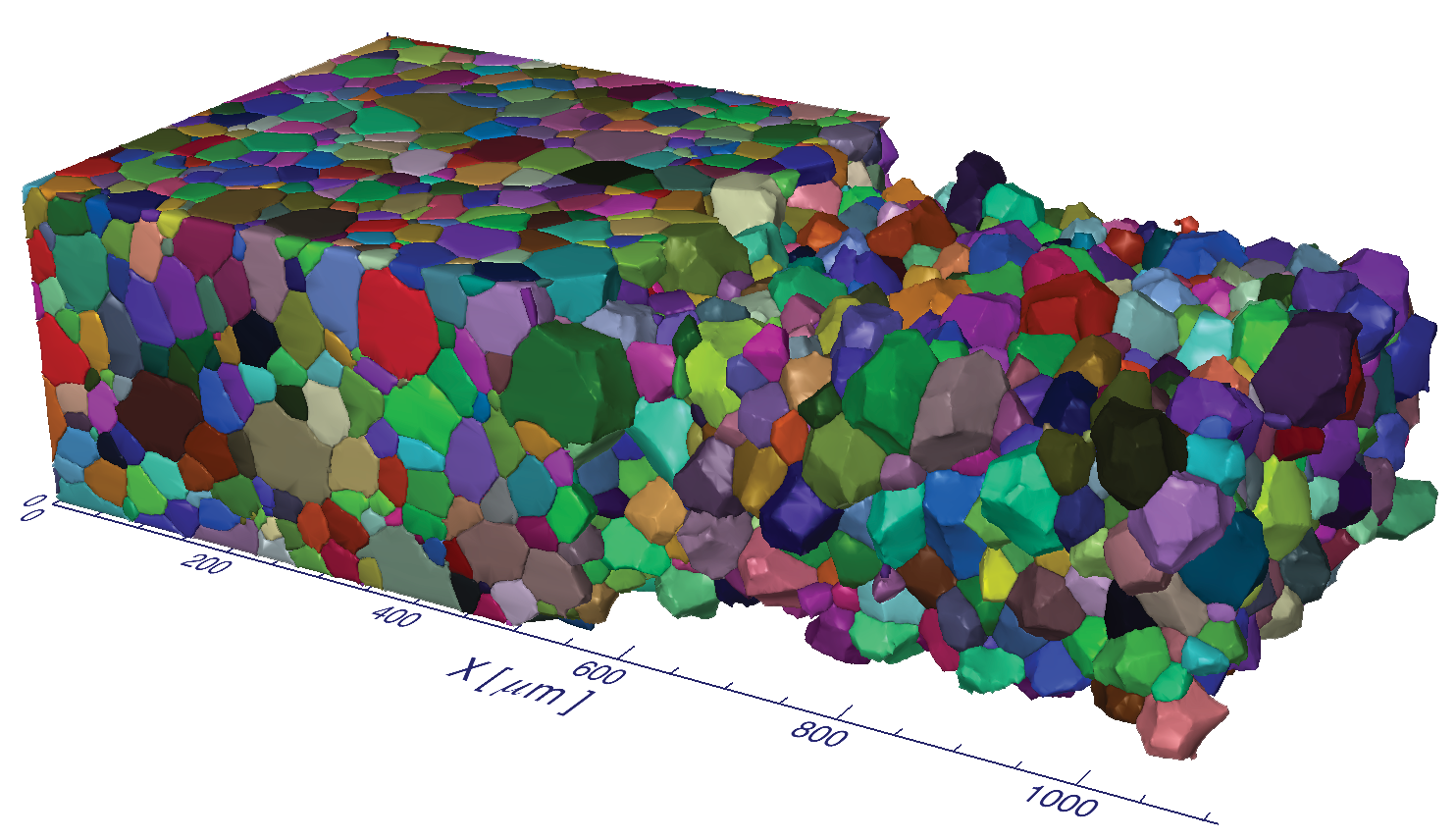}
\caption{Grains in a sample of the $\beta$-titanium alloy Ti--21S. The image is reproduced from ref.~ \cite{Rowenhorst10} by Rowenhorst, Lewis and Spanos.} \label{fig1}
\end{figure}


To model the microscopic dynamics in a polycrystal, we place at each point in $\scrP_\epsilon$ a spherical scatterer of radius $r>0$, and consider a point particle that moves freely until it hits a sphere, where it is scattered, e.g. by elastic reflection (as in the classic setting of the Lorentz gas) or by the force of a spherically symmetric potential. 
We denote the position and velocity at time $t$ by $\vecx(t)$ and $\vecv(t)$. Since (i) the particle speed outside the scatterers is a constant of motion and (ii) the scattering is elastic, we may assume without loss of generality $\|\vecv(t)\|=1$. The dynamics thus takes place in the unit tangent bundle $\T^1(\scrK_{\eps,r})$
where $\scrK_{\eps,r}\subset\RR^d$ is the complement of the set $\scrB^d_r + \scrP_\epsilon$. Here $\scrB^d_r$ denotes the open ball of radius $r$, centered at the origin. We parametrize $\T^1(\scrK_{\eps,r})$ by $(\vecx,\vecv)\in\scrK_{\eps,r}\times\S_1^{d-1}$, where we use the convention that for $\vecx\in\partial\scrK_{\eps,r}$ the vector $\vecv$ points away from the scatterer (so that $\vecv$ describes the velocity {\em after} the collision). 
The Liouville measure on $\T^1(\scrK_{\eps,r})$ is $d\nu(\vecx,\vecv)=d\vecx\,d\vecv$, where $d\vecx=d\!\vol_{\RR^d}(\vecx)$ and $d\vecv=d\!\vol_{\S_1^{d-1}}(\vecv)$ refer to the Lebesgue measures on $\RR^d$ 
and $\S_1^{d-1}$, respectively.

\section{Free path length}\label{sec:Free}

The first collision time with respect to the initial condition $(\vecx,\vecv)\in\T^1(\scrK_{\eps,r})$ is 
\begin{equation} \label{TAU1DEF0}
	\tau_1(\vecx,\vecv) = \inf\{ t>0 : \vecx+t\vecv \notin\scrK_{\eps,r} \}. 
\end{equation}
Since all particles are moving with unit speed, we may also refer to $\tau_1(\vecx,\vecv)$ as the free path length. The mean free path length, i.e.\ the average time between collisions, is for $r\to 0$ asymptotic to $\sigmabar^{\,-1} \epsilon^d r^{-(d-1)}$ where $\sigmabar=\vol\UB$ (the total scattering cross section in units of $r$); this calculation only takes into account the time travelled inside the grains. The scaling limit we are interested in is when the typical grain size is of the order of the mean free path length. We choose (without loss of generality) $\epsilon=r^{(d-1)/d}$ and fix this relation for the rest of this paper. The mean free path length in these units is thus $\sigmabar^{\,-1}$.

Given $(\vecx,\vecv)\in\RR^d\times\US$, we call the sequence $(i_\nu)_{\nu\in\NN}$ the {\em itinary of $(\vecx,\vecv)$} if $i_\nu=i_\nu(\vecx,\vecv)$ is the index of the $\nu$th grain $\scrG_{i_\nu}$ traversed by the trajectory $\{(\vecx+t\vecv,\vecv) : t\geq 0\}$. We denote by $\ell_\nu^-=\ell_\nu^-(\vecx,\vecv),\ell_\nu^+=\ell_\nu^+(\vecx,\vecv)\in[0,\infty]$ the entry resp.\ exit time for $\scrG_{i_\nu}$. If $\vecx\in \scrG_{i_1}$, or if $\vecx\in \partial\scrG_{i_1}$ and $\vecv$ points towards the grain, we set $\ell_1^-=0$. We furthermore define the sejour time for each grain by $\ell_\nu:=\ell_\nu^+-\ell_\nu^-$. Note that, if $\vecx\in\scrG_{i_1}$ and $s$ is suffciently small so that $\vecx+s\vecv\in\scrG_{i_1}$, then
\begin{equation}\label{shift1}
\ell_1^-(\vecx+s\vecv,\vecv) = \ell_1^-(\vecx,\vecv)=0,\qquad
\ell_1^+(\vecx+s\vecv,\vecv) = \ell_1^+(\vecx,\vecv) - s
\end{equation}
and, for all $\nu\geq 2$,
\begin{equation}\label{shift2}
\ell_\nu^\pm(\vecx+s\vecv,\vecv) = \ell_\nu^\pm(\vecx,\vecv) - s .
\end{equation}

We now consider initial data of the form $(\vecx_{\eps,r},\vecv)=(\vecx+\epsilon \vecq+r \vecbeta(\vecv),\vecv)$, where $\vecv\in\S_1^{d-1}$ is random, $\vecx,\vecq\in\RR^d$ are fixed and $\vecbeta:\S_1^{d-1}\to\RR^{d}$ is some fixed continuous function. For $\vecx_{\eps}:=\vecx+\epsilon \vecq\notin\scrP_\epsilon$ (the particle is not $r$-close to a scatterer), the free path length $\tau_1(\vecx_{\eps,r},\vecv)$ is evidently well defined for $r$ sufficiently small. If $\vecx_{\eps}\in\scrP_\epsilon$ (the particle is $r$-close to a scatterer), we assume in the following that $\vecbeta$ is chosen so that the ray $\vecbeta(\vecv)+\R_{\geq 0}\vecv$ lies completely outside the ball $\scrB_1^d$ for all $\vecv\in\S^{d-1}_1$
(thus $r\vecbeta(\vecv)+\R_{\geq 0}\vecv$ lies outside $\scrB_r^d$ for all $r>0$). 

Let $\scrS$ be the commensurator of $\SL(d,\Z)$ in $\SL(d,\R)$. We have
\begin{align*}
\scrS=
\{(\det T)^{-1/d}T\col T\in\GL(d,\Q),\:\det T>0\},
\end{align*}
cf.\ \cite[Thm.\ 2]{borel}, as well as \cite[Sec.\ 7.3]{studenmund}. We say that matrices  $M_1,M_2,\ldots\in\SLR$ are {\em pairwise incommensurable} if $M_i M_j^{-1}\notin\scrS$ for all $i\neq j$. 
The pairwise incommensurability of $M_1,M_2,\ldots$ is equivalent to the fact that the lattices $\scrL_i= (\Z^d+\vecomega_i) M_i$ ($i=1,2,\ldots$) are pairwise incommensurable, 
in the sense that for any $i\neq j$, $c>0$ and $\vecomega\in\R^d$, the intersection
$\scrL_i\cap(c\scrL_j+\vecomega)$ is contained in some affine linear subspace of dimension strictly less than $d$. A natural example in the present setting would be a sequence of matrices $M_i=MK_i$ with $M=1$ (or $M$ as in \eqref{BCC}) and incommensurable rotation matrices $K_i\in\SO(d)$, corresponding to (body-centered) cubic crystal grains with pairwise incommensurable orientation.

The following two theorems comprise our main results for the distribution of free path length. The first theorem deals with generic initial data, the second when the initial position is near or on a scatterer, but still generic with respect to lattices in other grains.

We will in the following use the notation 
\begin{equation}
D_\Phi(\xi)=\int_\xi^\infty \Phi(\eta) d\eta = 1-\int_0^\xi \Phi(\eta) d\eta   
\end{equation}
for the complementary distribution function of the probability density $\Phi$.

In the following we consider lattices $\scrL_i= \epsilon^{-1}\vecx+(\Z^d+\vecomega_i) M_i$ with an additional shift by $\epsilon^{-1}\vecx$. This looks artificial but is necessary for all subsequent statements to hold. [The problem becomes easier if we assume that $\vecomega_i$ are independent random variables uniformly distributed in the torus $\RR^d/\ZZ^d$. In this case it is fine to use $\scrL_i= (\Z^d+\vecomega_i) M_i$. All of the statements below will also hold in this case.]

\begin{thm}\label{freeThm}
Fix $\vecx\in\RR^d$ and, for all $i\in\NN$, let $\scrL_i= \epsilon^{-1}\vecx+(\Z^d+\vecomega_i) M_i$ with $\vecomega_i\in\R^d$, and $M_i\in\SL(d,\R)$ pairwise incommensurable. Fix $\vecq\in\R^d$ so that $\vecomega_i-\vecq M_i^{-1}\notin\QQ^d$ for all $i$. If $(i_\nu)_{\nu\in\NN}$ is the itinary of $(\vecx,\vecv)$, then, for any Borel probability measure 
$\lambda$ on $\S_1^{d-1}$ and any $\xi\geq 0$,
\begin{equation}\label{FPL}
\lim_{r\to 0} \lambda(\{ \vecv\in\S_1^{d-1} \col  \tau_1(\vecx_{\eps,r},\vecv)\geq \xi \})
= \int_{\xi}^\infty \int_{\US} \Psi(\vecx,\vecv,\eta)\,d\lambda(\vecv)\, d\eta  
\end{equation}
with
\begin{equation}\label{limid}
\Psi(\vecx,\vecv,\xi)
= 
\begin{cases}
\bigl(\prod_{\mu=1}^{\nu-1}  D_\Phi(\ell_{\mu})\bigr)\, \Phi(\xi-\ell_{\nu}^-)  & \text{if $\ell_{\nu}^-\leq \xi<\ell_{\nu}^+$} \\
0 & \text{otherwise,}
\end{cases}
\end{equation}
where $\Phi(\xi)$ is the limit probability density of the free path length in the case of a single lattice and for generic inital data, see \cite[Eq.~(1.21)]{partIV}.
\end{thm}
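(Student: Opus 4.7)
Proof plan. The strategy is to decompose the event $\{\tau_1\geq\xi\}$ into single-grain no-collision events, one per traversed grain, and to show that these events become asymptotically independent in the limit $r\to 0$ as a consequence of the pairwise incommensurability of the lattices $\scrL_i$.

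\emph{Step 1 (reduction to a product).} Fix $\vecv$ in the full-measure set where the itinerary $(i_\nu)$ is well defined, and let $\xi\in[\ell_\nu^-,\ell_\nu^+)$. Then $\tau_1(\vecx_{\eps,r},\vecv)\geq\xi$ if and only if the trajectory traverses the full sejour $\ell_\mu$ without collision in every grain $\scrG_{i_\mu}$ with $\mu\leq\nu-1$, and travels a further distance $\xi-\ell_\nu^-$ without collision inside $\scrG_{i_\nu}$. It therefore suffices to prove
\begin{equation*}
\lim_{r\to 0}\lambda\bigl(\{\vecv\in\US\col \tau_1(\vecx_{\eps,r},\vecv)\geq\xi\}\bigr)
=\int_{\US}\Bigl(\prod_{\mu=1}^{\nu-1} D_\Phi(\ell_\mu)\Bigr)\, D_\Phi(\xi-\ell_\nu^-)\, d\lambda(\vecv);
\end{equation*}
differentiating piecewise in $\xi$ on each interval $[\ell_\nu^-,\ell_\nu^+)$ recovers \eqref{FPL}--\eqref{limid}, with the gap regions between consecutive grains contributing $\Psi\equiv 0$ because the survival probability is constant there.

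\emph{Step 2 (single-grain marginal).} Inside a fixed grain $\scrG_{i_\mu}$ the scatterer configuration is a piece of the single affine lattice $\epsilon\scrL_{i_\mu}$. By the shift relations \eqref{shift1}--\eqref{shift2} and the hypothesis $\vecomega_i-\vecq M_i^{-1}\notin\QQ^d$, the restricted problem matches the generic-initial-data setup of \cite{partIV}; in particular the probability, conditional on $\vecv$, that no collision occurs within macroscopic time $\ell_\mu$ tends to $D_\Phi(\ell_\mu)$ as $r\to 0$, and $D_\Phi$ is independent of the choice of lattice.

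\emph{Step 3 (joint equidistribution).} Following \cite{partI,partII}, the joint probability across grains $\mu=1,\ldots,\nu$ is controlled by the joint distribution, in the product moduli space $\prod_{\mu=1}^\nu \ASLZ\backslash\ASLR$, of the frames $g_\mu(\vecv)$ in which the trajectory enters each grain. The equidistribution theorem of \cite{union} for unions of pairwise incommensurable lattices states that this joint distribution converges, as $r\to 0$, to the product of $\nu$ copies of the Haar probability measure, yielding the factorization into $\prod_\mu D_\Phi(\ell_\mu)$ claimed in Step 1.

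The main obstacle is Step 3. The entry data into $\scrG_{i_{\mu+1}}$ is a deterministic and geometrically rigid function of the exit data from $\scrG_{i_\mu}$, so a priori one could fear a coupling between the coordinates $g_\mu$ and $g_{\mu+1}$. The pairwise incommensurability assumption $M_iM_j^{-1}\notin\scrS$ is precisely what rules this out: by the Ratner-type classification underlying \cite{union}, any limit measure on the product invariant under the relevant product of unipotent subgroups must be Haar. The shift by $\epsilon^{-1}\vecx$ in the definition of each $\scrL_i$ plays an essential role here, aligning the reference frames used in the different grains so that the theorem of \cite{union} applies directly; without it, the generic-$\vecq$ condition alone would not suffice to eliminate every resonance between the $M_i$'s.
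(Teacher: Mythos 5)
Your proposal follows essentially the same route as the paper: the paper likewise reduces everything to a joint equidistribution statement on the product $\prod_i \ASLZ\backslash\ASLR$ (its Theorems~\ref{thm:ld1}--\ref{thm:ld2}, deduced from \cite[Theorem 10]{union} exactly because of the pairwise incommensurability), from which the independence across grains, the factorization $\prod_{\mu} D_\Phi(\ell_\mu)$ and the single-grain densities of \cite{partI,partIV} follow. The only cosmetic difference is that the paper packages the intermediate step as convergence of the rescaled point processes $\Theta_\epsilon$ and obtains Theorem~\ref{freeThm} as the special case $\fU=\HS$ of the transition-kernel Theorem~\ref{exactpos1}, whereas you argue directly with the per-grain survival probabilities.
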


By \cite[Eq.~(1.23)]{partIV} we have,
\begin{equation}\label{smallxi}
\Phi(\xi)=\sigmabar-\frac{\sigmabar^2}{\zeta(d)}\xi+O(\xi^2),
\end{equation}
where $\zeta(d)$ is the Riemann zeta function and the remainder is non-negative. In dimension $d=2$ the error term in fact vanishes identically for $\xi$ sufficiently small;
indeed, for $0< \xi \leq \frac12$ we have \cite[Theorem 2]{Boca07}
\begin{equation}
\Phi(\xi)=2-\frac{24}{\pi^2}\,\xi 
\end{equation}
and hence
\begin{equation}
D_\Phi(\xi)=1-2\xi+\frac{12}{\pi^2}\,\xi^2 .
\end{equation}
In dimension $d=3$ we have \cite[Corollary (1.6)]{partIV} for $0<\xi\leq\frac14$
\begin{equation}
\Phi(\xi)=\pi-\frac{\pi^2}{\zeta(3)}\xi
+\frac{3\pi^2+16}{2\pi\zeta(3)}\xi^2
\end{equation}
and so
\begin{equation}
D_\Phi(\xi)=1-\pi\xi+\frac{\pi^2}{2\zeta(3)}\xi^2
-\frac{3\pi^2+16}{6\pi\zeta(3)}\xi^3 .
\end{equation}
This means that the limit distribution $\Psi(\vecx,\vecv,\xi)$ is completely explicit, if the diameter of each grain is bounded above by $\frac12$ in dimension $d=2$ resp.\ $\frac14$ in dimension $d=3$.

Let us now turn to initial data near a scatterer. Let us fix a map $K:\S_1^{d-1}\to\SO(d)$ such that
$\vecv K(\vecv)=\vece_1$ for all $\vecv\in\S_1^{d-1}$ where $\vece_1:=(1,0,\ldots,0)$.
We assume that $K$ is smooth when restricted to $\S_1^{d-1}$ minus
one point (see \cite[footnote 3, p.~1968]{partI} for an explicit construction). We denote by $\vecx_\perp$ the orthogonal projection of $\vecx\in\RR^d$ onto the hyperplane perpendicular to $\vece_1$.

\begin{thm}\label{freeThm2}
Fix $\vecx\in\scrG_j$ for some $j\in\NN$, and, for all $i\in\NN$, let $\scrL_i= \epsilon^{-1}\vecx+(\Z^d+\vecomega_i) M_i$ with $\vecomega_i\in\R^d$, and $M_i\in\SL(d,\R)$ pairwise incommensurable. Fix $\vecq\in\RR^d$, such that $\vecx_\epsilon=\vecx+\epsilon\vecq\in\epsilon\scrL_j$ and such that $\vecomega_i-\vecq M_i^{-1}\notin\QQ^d$ for all $i\neq j$. If $(i_\nu)_{\nu\in\NN}$ is the itinary of $(\vecx,\vecv)$, then, for any Borel probability measure 
$\lambda$ on $\S_1^{d-1}$ and any $\xi\geq 0$,
\begin{equation}\label{FPL2}
\lim_{r\to 0} \lambda(\{ \vecv\in\S_1^{d-1} \col  \tau_1(\vecx_{\eps,r},\vecv)\geq \xi \})
= \int_\xi^\infty \int_{\US} \Psi_\vecnull(\vecx,\vecv,\eta,(\vecbeta(\vecv) K(\vecv))_\perp)\,d\lambda(\vecv) \,d\eta
\end{equation}
with $\Psi_\vecnull(\vecx,\vecv,\xi,\vecw)$ defined for any $\vecx\in\cup_j \scrG_j$ by
\begin{equation}\label{limid2}
\Psi_\vecnull(\vecx,\vecv,\xi,\vecw)
= 
\begin{cases}
\Phi_\vecnull(\xi,\vecw) & \text{if $0\leq \xi<\ell_{1}^+$} \\
\Phi(\ell_{1},\vecw) \, \bigl(\prod_{\mu=2}^{\nu-1} D_\Phi(\ell_{\mu})\bigr) \, \Phi(\xi-\ell_{\nu}^-)   & \text{if $\ell_{\nu}^-\leq \xi<\ell_{\nu}^+$ ($\nu\geq2$)} \\
0 & \text{otherwise,}
\end{cases}
\end{equation}
where $\Phi_\vecnull(\xi,\vecw)$ is the corresponding limit probability density in the case of a single lattice,
and $\Phi(\xi,\vecw)=\int_\xi^\infty\Phi_\vecnull(\eta,\vecw)\,d\eta$.
\end{thm}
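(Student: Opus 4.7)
The plan is to mirror the argument for Theorem \ref{freeThm}, with the sole modification that the contribution from the initial grain $\scrG_{i_1}=\scrG_j$ must be computed using the on-scatterer density $\Phi_\vecnull$ rather than the generic density $\Phi$. Indeed, the hypothesis $\vecx_\epsilon\in\epsilon\scrL_j$ places the particle precisely at a scatterer centre of the first grain (up to the impact offset $r\vecbeta(\vecv)$), whereas for every grain $\scrG_{i_\mu}$ with $\mu\geq 2$ the genericity assumption $\vecomega_{i_\mu}-\vecq M_{i_\mu}^{-1}\notin\QQ^d$, combined with the pairwise incommensurability of the $M_i$, will ensure that the entry position appears asymptotically uniform modulo $\epsilon\scrL_{i_\mu}$ as $r\to 0$.

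The first step is to localise in the itinerary. Fix $\vecv$ with $\xi\in[\ell_\nu^-(\vecx,\vecv),\ell_\nu^+(\vecx,\vecv))$ and rewrite $\{\tau_1(\vecx_{\eps,r},\vecv)\geq\xi\}$ as the intersection of no-collision events $E_\mu^{(r)}$ for $\mu=1,\ldots,\nu$, where $E_\mu^{(r)}$ with $\mu<\nu$ asserts that the thickened lattice $\epsilon\scrL_{i_\mu}+\scrB^d_r$ is disjoint from the trajectory segment traversing $\scrG_{i_\mu}$, and $E_\nu^{(r)}$ is the analogous condition restricted to the initial sub-segment $[\ell_\nu^-,\xi]$. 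Following \cite{partI,partII}, each $E_\mu^{(r)}$ translates into a cylinder event on a section of $\ASLZ\backslash\ASLR$ associated to the lattice $\scrL_{i_\mu}$, the $\mu=1$ section being of on-scatterer type and the $\mu\geq 2$ sections being of generic type. Next, I pass to the product of $\nu$ copies of this homogeneous space and invoke the joint equidistribution theorem from \cite{union}, whose precise hypotheses are the pairwise incommensurability $M_iM_j^{-1}\notin\scrS$ together with the $\QQ^d$-genericity of the shift parameters for $i\neq j$. This yields in the limit $r\to 0$ a factorisation of the joint probability into single-lattice factors, which are read off from the existing theory: the $\mu=1$ factor is either the on-scatterer survival $\Phi(\ell_1,\vecw)$ with $\vecw=(\vecbeta(\vecv)K(\vecv))_\perp$ (when $\nu\geq 2$) or the on-scatterer density $\Phi_\vecnull(\xi,\vecw)$ (when $\nu=1$); each intermediate factor $2\leq\mu<\nu$ equals the generic survival $D_\Phi(\ell_\mu)$; and the terminal factor when $\nu\geq 2$ equals the generic density $\Phi(\xi-\ell_\nu^-)$. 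Multiplying these reproduces \eqref{limid2}, and integrating in $\xi$ against $\lambda$ gives \eqref{FPL2}. The itinerary can be safely truncated to finite length because $\{\scrG_i\}_i$ is locally finite.

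The principal obstacle is the joint equidistribution in the product space, which is subtler here than in Theorem \ref{freeThm}: the initial factor requires a \emph{different} section from all subsequent factors, namely an on-scatterer section pinned to the starting lattice point of $\scrL_j$, while the remaining factors demand generic sections. The necessary mixed-type equidistribution statement is nonetheless of the form proved in \cite{union}; the two arithmetic conditions in the theorem are exactly calibrated for this, since $\vecx_\epsilon\in\epsilon\scrL_j$ forces the first section to be on-scatterer, while $\vecomega_i-\vecq M_i^{-1}\notin\QQ^d$ for $i\neq j$ guarantees that the remaining factors equidistribute independently over their generic sections.
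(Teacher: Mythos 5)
Your proposal follows essentially the same route as the paper: the paper also reduces the statement to a joint equidistribution result for the several incommensurable lattices (Theorem~\ref{thm:ld2}, deduced from \cite[Theorem 10]{union}), in which the factor attached to the initial grain $\scrG_j$ lives on the on-scatterer space $\SL(d,\Z)\backslash\SL(d,\R)$ while all other factors live on generic copies of $\ASL(d,\Z)\backslash\ASL(d,\R)$, and the product structure of the limit measure yields exactly the factorisation $\Phi(\ell_1,\vecw)\,\prod_{\mu=2}^{\nu-1}D_\Phi(\ell_\mu)\,\Phi(\xi-\ell_\nu^-)$ in \eqref{limid2}. Your identification of the roles of the two arithmetic hypotheses and of the mixed on-scatterer/generic sections matches the paper's argument, which then passes from point-process convergence to \eqref{FPL2} by the standard steps of \cite{partI}.
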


The single-lattice density is given by $\Phi_\bn(\xi,\vecw)=\int_{\scrB_1^{d-1}}\Phi_\bn(\xi,\vecw,\vecz)\,d\vecz$
with $\Phi_\bn(\xi,\vecw,\vecz)$ as in \cite[Sect.~1.1]{partIV}
(cf.\ also Thm.\ \ref{exactpos12} below).
We extend the definition of $\Psi_\vecnull(\vecx,\vecv,\xi,\vecw)$ to all $\vecx\in\RR^d$ as follows. Given a grain $\scrG_j$ and $(\vecx,\vecv)$ with $\vecx\in\partial \scrG_j$, we say $\vecv$ {\em is pointing inwards} if there exists some $\epsilon_0>0$ such that $\{\vecx+\epsilon\vecv \col 0<\epsilon< \epsilon_0\}\subset \scrG_j$. Let
\begin{equation}
\scrH_j:=\{ (\vecx,\vecv) \in\partial\scrG_j\times\US : \text{$\vecv$ is pointing inwards}\}
\end{equation}
and
\begin{equation}
\widehat\scrG_j:= \big(\scrG_j\times\US \big)\cup \scrH_j .
\end{equation}
We now extend the definition of $\Psi_\vecnull(\vecx,\vecv,\xi,\vecw)$ to all $\vecx\in\RR^d$, $\xi>0$, by setting
\begin{equation}
\Psi_\vecnull(\vecx,\vecv,\xi,\vecw) =
\begin{cases}
\lim_{\epsilon\to 0_+} \Psi_\vecnull(\vecx+\epsilon\vecv,\vecv,\xi-\epsilon,\vecw) 
& \text{if $(\vecx,\vecv)\in \cup_j \scrH_j$} \\
0 & \text{if $(\vecx,\vecv)\notin \cup_j \widehat\scrG_j$}. 
\end{cases}
\end{equation}
Let us furthermore define
\begin{equation}
\one(\vecx,\vecv) = 
\begin{cases}
1 & \text{if $(\vecx,\vecv)\in\cup_i \widehat\scrG_i$,}\\
0 & \text{otherwise,}
\end{cases}
\end{equation}
and the differential operator $\scrD$ by (assume $\xi>0$)
\begin{equation}
\scrD \Psi(\vecx,\vecv,\xi) = 
\lim_{\epsilon\to 0_+} \epsilon^{-1}[\Psi(\vecx+\epsilon\vecv,\vecv,\xi-\epsilon)-\Psi(\vecx,\vecv,\xi)]
.
\end{equation}
Note that we have $\scrD \Psi(\vecx,\vecv,\xi)=\big[ \vecv\cdot\nabla_\vecx - \partial_\xi \big] \Psi(\vecx,\vecv,\xi)$ wherever the right-hand side is well defined (which is the case
on a set of full measure).
In the case of a single lattice, we have \cite[Eq.~(1.21)]{partIV}
\begin{equation}\label{sc001}
 \Phi(\xi) = \int_\xi^\infty \int_\UB \Phi_\vecnull(\eta,\vecw)\, d\vecw \,d\eta.
\end{equation}
In the case of a polylattice, \eqref{sc001} generalizes to 
\begin{equation}\label{Cau1}
\begin{cases}
\scrD \Psi(\vecx,\vecv,\xi)= \int_{\UB}  \Psi_\vecnull(\vecx,\vecv,\xi,\vecw)\,d\vecw & (\xi>0) \\
\Psi(\vecx,\vecv,0) = \sigmabar\, \one(\vecx,\vecv).  & 
\end{cases}
\end{equation}
This relation follows from \eqref{limid}, \eqref{limid2} and \eqref{sc001} in view of the relations \eqref{shift1}, \eqref{shift2}. 


\section{The transition kernel}\label{sec:Transition}

To go beyond the distribution of free path length, and towards a full understanding of the particle dynamics in the Boltzmann-Grad limit, we need to refine the results of the previous section and consider the joint distribution of the free path length and the precise location {\em on} the scatterer where the particle hits.

Given initial data $(\vecx,\vecv)$, we denote the position of impact on the first scatterer by
\begin{equation}
	\vecx_1(\vecx,\vecv) := \vecx+\tau_1(\vecx,\vecv) \vecv .
\end{equation}
Given the scatterer location $\vecy\in\scrP_\eps$, we have
$\vecx_1(\vecx,\vecv)\in \S_r^{d-1} + \vecy$ and therefore there is a unique point
$\vecw_1(\vecx,\vecv)\in \S_1^{d-1}$ such that
$\vecx_1(\vecx,\vecv)=r \vecw_1(\vecx,\vecv)+\vecy$. 
It is evident that $-\vecw_1(\vecx,\vecv) K(\vecv)\in \HS$, with the hemisphere $\HS=\{\vecv=(v_1,\ldots,v_d)\in\S_1^{d-1} \col v_1>0\}$. The impact parameter of the first collision is $\vecb=(\vecw_1(\vecx,\vecv) K(\vecv))_\perp$.

As in Section \ref{sec:Free}, we will use the initial data $(\vecx_{\eps,r},\vecv)=(\vecx+\epsilon \vecq+r \vecbeta(\vecv),\vecv)$, where $\vecv\in\S_1^{d-1}$ is random, $\vecx,\vecq\in\RR^d$ are fixed and $\vecbeta:\S_1^{d-1}\to\RR^{d}$ is some fixed continuous function. 

We again have two theorems, the first for generic initial data, the second when the initial position is near or on a scatterer, but still generic with respect to lattices in other grains. Theorems \ref{freeThm} resp.\ \ref{freeThm2} follow from Theorems \ref{exactpos1} resp.\ \ref {exactpos12} below by taking the test set $\fU=\HS$.

\begin{thm}\label{exactpos1}
Fix $\vecx\in\RR^d$ and, for all $i\in\NN$, let $\scrL_i= \epsilon^{-1}\vecx+(\Z^d+\vecomega_i) M_i$ with $\vecomega_i\in\R^d$, and $M_i\in\SL(d,\R)$ pairwise incommensurable. Fix $\vecq\in\R^d$ so that $\vecomega_i-\vecq M_i^{-1}\notin\QQ^d$ for all $i$. If $(i_\nu)_{\nu\in\NN}$ is the itinary of $(\vecx,\vecv)$, then for any Borel probability measure 
$\lambda$ on $\S_1^{d-1}$ absolutely 
continuous with respect to $\vol_{\S_1^{d-1}}$, any
subset $\fU\subset\HS$ with $\vol_{\S_1^{d-1}}(\partial\fU)=0$, 
and any $0\leq a< b$, we have
\begin{multline} \label{exactpos1eq}
\lim_{r\to 0}  \lambda\bigl(\bigl\{ \vecv\in\S_1^{d-1} \col 
\tau_1\in [a,b), \:  
-\vecw_1K(\vecv)\in\fU \bigr\}\bigr) \\
=\int_{a}^{b} \int_{\fU_\perp} \int_{\S_1^{d-1}} 
\Psi\bigl(\vecx,\vecv,\xi,\vecw) 
\, d\lambda(\vecv)\, d\vecw \, d\xi,
\end{multline}
where 
\begin{equation}\label{limid3}
\Psi(\vecx,\vecv,\xi,\vecw)
= 
\begin{cases}
\bigl(\prod_{\mu=1}^{\nu-1}  D_\Phi(\ell_{\mu})\bigr)
\,  \Phi(\xi-\ell_{\nu}^-,\vecw)  & \text{if $\ell_{\nu}^-\leq \xi<\ell_{\nu}^+$} \\
0 & \text{otherwise.}
\end{cases}
\end{equation}
\end{thm}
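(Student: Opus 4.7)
The plan is to decompose the event by which grain contains the first collision, and then to reduce the resulting single-grain and multi-grain conditions to a joint equidistribution statement on $\nu$ copies of $\ASL(d,\ZZ)\backslash\ASL(d,\RR)$, one per grain traversed. The single-lattice equidistribution from \cite{partI,partIV} will supply the factors appearing in \eqref{limid3}, while the joint equidistribution needed to combine them is exactly the type of result established in \cite{union}.

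First I would argue that it suffices to prove the theorem one $\nu$ at a time. Since $\lambda$ is absolutely continuous, $\vol_{\S_1^{d-1}}(\partial\fU)=0$, and the directions $\vecv$ tangent to some $\partial\scrG_j$ form a $\lambda$-null set, the event
\begin{equation*}
\bigl\{\vecv\in\US \col \tau_1(\vecx_{\eps,r},\vecv)\in[a,b),\ -\vecw_1(\vecx_{\eps,r},\vecv)K(\vecv)\in\fU\bigr\}
\end{equation*}
splits, up to null sets, according to which grain $\scrG_{i_\nu}$ contains the first collision, i.e.\ according to the index $\nu$ satisfying $\tau_1\in[\ell_\nu^-,\ell_\nu^+)$. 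For each fixed $\nu$ the event is the conjunction of (i) the trajectory avoids the $r$-neighborhood of $\eps\scrL_{i_\mu}\cap\scrG_{i_\mu}$ for $\mu=1,\ldots,\nu-1$, and (ii) inside $\scrG_{i_\nu}$ the first collision occurs at relative time $\xi-\ell_\nu^-$ in $[0,\ell_\nu)$ with impact parameter in $\fU_\perp$. Convexity of the grains forbids re-entry, so these conditions depend on pairwise disjoint segments of the trajectory and on the pairwise incommensurable lattices $\scrL_{i_1},\ldots,\scrL_{i_\nu}$.

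Next I would lift each single-lattice condition to the homogeneous space $\Gamma\bs\ASL(d,\RR)$ with $\Gamma=\ASLZ$, following the framework of \cite[Sect.\ 5]{partI}: after rescaling by $\eps^{-1}$ and applying the collision scaling $a(r):=\diag(r^{d-1},r^{-1},\ldots,r^{-1})$, the geometry of the trajectory through lattice $\scrL_{i_\mu}$ becomes a condition on a translated horospherical $\vecv$-orbit in $\Gamma\bs\ASL(d,\RR)$ built out of $M_{i_\mu}$, the shift $\vecomega_{i_\mu}-\vecq M_{i_\mu}^{-1}$, and the rotation $K(\vecv)^{-1}$. The composite event for fixed $\nu$ then translates into a joint condition on the $\vecv$-orbit in the product $\bigl(\Gamma\bs\ASL(d,\RR)\bigr)^\nu$. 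The main technical step, and the principal obstacle, will be to prove that this joint orbit equidistributes, as $r\to 0$, to the product of Haar measures. Both standing hypotheses enter essentially here: pairwise incommensurability $M_iM_j^{-1}\notin\scrS$ excludes algebraic relations between the orbits on different factors, and the shift condition $\vecomega_i-\vecq M_i^{-1}\notin\QQ^d$ prevents an affine orbit from collapsing onto a lower-dimensional $\SLZ$-invariant submanifold. This is precisely the setting addressed by the equidistribution results of \cite{union}.

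Once product-equidistribution is established, the limit of $\lambda\bigl(\cdots\bigr)$ factors as a product of single-lattice integrals. Each survival factor (no collision in grain $\mu<\nu$) evaluates to $D_\Phi(\ell_\mu)$ by the single-lattice free-path-length statement of \cite[Eq.\ (1.21)]{partIV}, while the collision-in-grain-$\nu$ factor evaluates to the joint (free path length, impact parameter) density $\Phi(\xi-\ell_\nu^-,\vecw)$, which is the impact-parameter refinement of the same single-lattice result established in \cite{partI}. Multiplying these factors and summing over $\nu$ produces \eqref{exactpos1eq} with $\Psi$ as in \eqref{limid3}. Residual issues---trajectories grazing some $\partial\scrG_j$, or initial positions $\vecx_\eps$ happening to lie on some $\eps\scrL_i$---contribute only on $\lambda$-null sets and can be absorbed by a standard approximation argument.
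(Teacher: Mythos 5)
Your proposal follows essentially the same route as the paper: the authors likewise reduce Theorem~\ref{exactpos1} to a joint equidistribution statement on the product $\prod_i\ASLZ\backslash\ASLR$ (their Theorems~\ref{thm:ld1}--\ref{thm:ld2}, formulated as convergence of the rescaled point process $\Theta_\epsilon$ to a product of independent random lattices restricted to the grain cross-sections), deduced from \cite[Theorem 10]{union}, and then extract the free path and impact-parameter statistics by the arguments of \cite[Sections 6 and 9]{partI}, which is exactly your decomposition by the index $\nu$ of the colliding grain and the factorization into $D_\Phi(\ell_\mu)$ survival factors times $\Phi(\xi-\ell_\nu^-,\vecw)$. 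No substantive gap.
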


\vspace{5pt}

\begin{thm}\label{exactpos12}
Fix $\vecx\in\scrG_j$ for some $j\in\NN$, and, for all $i\in\NN$, let $\scrL_i= \epsilon^{-1}\vecx+(\Z^d+\vecomega_i) M_i$ with $\vecomega_i\in\R^d$, and $M_i\in\SL(d,\R)$ pairwise incommensurable. Fix $\vecq\in\RR^d$, such that $\vecx_\epsilon=\vecx+\epsilon\vecq\in\epsilon\scrL_j$ and such that $\vecomega_i-\vecq M_i^{-1}\notin\QQ^d$ for all $i\neq j$. If $(i_\nu)_{\nu\in\NN}$ is the itinary of $(\vecx,\vecv)$, then for any Borel probability measure 
$\lambda$ on $\S_1^{d-1}$ absolutely 
continuous with respect to $\vol_{\S_1^{d-1}}$, any
subset $\fU\subset\HS$ with $\vol_{\S_1^{d-1}}(\partial\fU)=0$, 
and any $0\leq a< b$, we have
\begin{multline} \label{exactpos1eq2}
\lim_{r\to 0}  \lambda\bigl(\bigl\{ \vecv\in\S_1^{d-1} \col 
\tau_1\in [a,b), \:  
-\vecw_1K(\vecv)\in\fU \bigr\}\bigr) \\
=\int_{a}^{b} \int_{\fU_\perp} \int_{\S_1^{d-1}} 
\Psi_\vecnull\bigl(\vecx,\vecv,\xi,\vecw,(\vecbeta(\vecv)K(\vecv))_\perp\bigr) 
\, d\lambda(\vecv)\, d\vecw \, d\xi
\end{multline}
with
\begin{equation}\label{limid22}
\Psi_\vecnull(\vecx,\vecv,\xi,\vecw,\vecz)
= 
\begin{cases}
\Phi_\vecnull(\xi,\vecw,\vecz) & \text{if $0\leq \xi<\ell_{1}^+$} \\
\Phi(\ell_{1},\vecz) \, \bigl(\prod_{\mu=2}^{\nu-1} D_\Phi(\ell_{\mu})\bigr)\, \Phi(\xi-\ell_{\nu}^-,\vecw)    & \text{if $\ell_{\nu}^-\leq \xi<\ell_{\nu}^+$ ($\nu\geq2$),} \\
0 & \text{otherwise,}
\end{cases}
\end{equation}
where $\Phi_\vecnull(\xi,\vecw,\vecz)$ is the transition kernel for a single lattice, cf.~\cite[Sect.~1.1]{partIV}.
\end{thm}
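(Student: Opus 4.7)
My plan is to adapt the proof of Theorem \ref{exactpos1} by tracking what happens in the ``exact-position'' starting grain $\scrG_j$ separately from the subsequent generic grains. The two main ingredients are the single-lattice limit theorems of \cite[Sect.~1.1]{partIV} (including the exact-position transition kernel $\Phi_\vecnull(\xi,\vecw,\vecz)$) together with the joint equidistribution theorem for pairwise incommensurable lattices from \cite{union}.

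First, by absolute continuity of $\lambda$ I may discard the $\lambda$-null set of $\vecv$ for which the trajectory is tangent to some $\partial\scrG_{i_\mu}$, so that the itinerary $(i_\mu)$ is locally constant and the entry/exit times $\ell_\mu^\pm(\vecx,\vecv)$ depend continuously on $\vecv$. Fixing such a $\vecv$ and the index $\nu$ with $\ell_\nu^-\leq\xi<\ell_\nu^+$, the event $\{\tau_1\in[a,b),\ -\vecw_1K(\vecv)\in\fU\}$ is, after splitting by which grain contains the first collision, the conjunction of three geometric requirements: (i) no collision with any scatterer of $\eps\scrL_j$ inside $\scrG_j$ during time $[0,\min(\xi,\ell_1^+))$, and, if $\xi<\ell_1^+$, the first collision point has impact parameter in $\fU_\perp$; (ii) no collision with $\eps\scrL_{i_\mu}$ while traversing $\scrG_{i_\mu}$ for $2\leq\mu\leq\nu-1$; and (iii) in $\scrG_{i_\nu}$, survival for an extra length $\xi-\ell_\nu^-$ followed by a first collision with impact parameter in $\fU_\perp$.

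Each of these three events is controlled by a single-lattice limit as $r\to 0$. In $\scrG_j$ the starting point $\vecx_\eps$ is exactly a scatterer centre and the particle is launched with initial offset $r\vecbeta(\vecv)$, so \cite[Sect.~1.1]{partIV} yields the limit density $\Phi_\vecnull(\xi,\vecw,\vecz)$ with $\vecz=(\vecbeta(\vecv)K(\vecv))_\perp$; integrating in $\vecw$ gives the survival factor $\Phi(\ell_1,\vecz)$ appearing in \eqref{limid22} for $\nu\geq 2$. For each intermediate or final grain, the entry point equals $\vecx+\ell_\mu^-\vecv+\eps\vecq$ up to an $O(r)$ error, and the assumption $\vecomega_{i_\mu}-\vecq M_{i_\mu}^{-1}\notin\Q^d$ ensures this point is irrational modulo $\eps\scrL_{i_\mu}$; the generic single-lattice theorem then contributes the no-hit probability $D_\Phi(\ell_\mu)$ for $2\leq\mu\leq\nu-1$ and the hitting density $\Phi(\xi-\ell_\nu^-,\vecw)$ in the final grain.

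The main obstacle is justifying that these individual limits combine into a product. Each single-lattice limit law arises from the equidistribution of a one-parameter diagonal orbit on $\SLZ\bs\SLR$, with a cuspidal (exact-position) datum in the first grain and a generic datum in the remaining grains. What is needed here is the simultaneous equidistribution of the corresponding $\nu$-tuple of orbits on the product $\prod_{\mu=1}^{\nu}\SLZ\bs\SLR$ against the product of Haar measures. This is precisely the type of statement established in \cite{union}: pairwise incommensurability of $M_{i_1},\ldots,M_{i_\nu}$ rules out rational relations between the factors and therefore ensures that the diagonal orbit equidistributes in the product, while the genericity of the shifts $\vecomega_{i_\mu}-\vecq M_{i_\mu}^{-1}$ removes the remaining rational obstructions on each factor. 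Granting this joint equidistribution, the three contributions multiply; integration in $\vecv$ against $\lambda$ and in $\xi$ over $[a,b)$ then produces \eqref{exactpos1eq2} with $\Psi_\vecnull$ in the form \eqref{limid22}. Finally, specialising $\fU=\HS$ and integrating over $\vecw\in\UB$ recovers Theorem \ref{freeThm2}, completing the link with the free-path-length statements of Section \ref{sec:Free}.
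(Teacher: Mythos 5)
Your proposal is correct and follows essentially the same route as the paper: what you call the ``joint equidistribution of the $\nu$-tuple of orbits'' is exactly what the paper packages as convergence of the random point process $\Theta_\epsilon(\vecx_{\epsilon,r},\vecv)$ to the limit process $\widetilde\Theta^{(j)}(\vecx,\vecv,(g_i))$ with $(g_i)$ distributed according to the product measure $\widetilde\omega$ on $\widetilde\Omega^{(j)}$ (Theorem \ref{thm:ld2}, deduced from \cite[Theorem 10]{union}), and the product structure of $\widetilde\omega$ is precisely the factorization across grains that you invoke before integrating in $\vecv$ and $\xi$. One small correction: the relevant product space is not $\prod_{\mu}\SLZ\backslash\SLR$ but $(\SLZ\backslash\SLR)\times\prod_{i\neq j}\ASLZ\backslash\ASLR$ --- only the initial grain $\scrG_j$, where the starting point is a scatterer centre, lives on the space of lattices, while the remaining grains carry the irrational shifts $\vecomega_i-\vecq M_i^{-1}$ and hence require the space of affine lattices (equidistributing on $\prod\SLZ\backslash\SLR$ alone would produce the exact-position kernel in every grain rather than the generic kernels $D_\Phi(\ell_\mu)$ and $\Phi(\xi-\ell_\nu^-,\vecw)$ that you correctly identify).
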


As above, we extend the definition of $\Psi_\vecnull(\vecx,\vecv,\xi,\vecw,\vecz)$ to all $\vecx\in\RR^d$ by setting
\begin{equation}
\Psi_\vecnull(\vecx,\vecv,\xi,\vecw,\vecz) =
\begin{cases}
\lim_{\epsilon\to 0_+} \Psi_\vecnull(\vecx+\epsilon\vecv,\vecv,\xi-\epsilon,\vecw,\vecz) 
& \text{if $(\vecx,\vecv)\in \cup_j \scrH_j$} \\
0 & \text{if $(\vecx,\vecv)\notin \cup_j \widehat\scrG_j$}. 
\end{cases}
\end{equation}

We refer the reader to \cite{partI,partIV} for a detailed study of $\Phi_{\vecnull}(\xi,\vecw,\vecz)$, $\Phi(\xi,\vecw)$ and $\Phi_{\vecnull}(\xi,\vecw)$,
which are related via \cite[Eq.~(6.67)]{partII},
\begin{equation}\label{fit}
\Phi(\xi,\vecw) =\int_{\xi}^\infty \int_{\scrB_1^{d-1}} \Phi_{\vecnull}(\eta,\vecw,\vecz)\, d\vecz\,d\eta 
\end{equation} 
and
\begin{equation}\label{fit22}
\Phi_\vecnull(\xi,\vecw) =\int_{\scrB_1^{d-1}} \Phi_{\vecnull}(\xi,\vecw,\vecz)\, d\vecz .
\end{equation} 
We have in particular \cite[Eq.~(1.18)]{partIV}, 
\begin{align}\label{PHI0ZEROSMALLTHMRES}
\frac{1-2^{d-1}\sigmabar \xi}{\zeta(d)}\leq\Phi_\bn(\xi,\vecw,\vecz)\leq\frac{1}{\zeta(d)} ,
\end{align}
that is, $\Phi_\bn(\xi,\vecw,\vecz)=\zeta(d)^{-1} + O(\xi)$, and \cite[Eq.~(1.19)]{partIV}
\begin{align}
\Phi(\xi,\vecw)
=1-\frac{\sigmabar}{\zeta(d)}\,\xi+O(\xi^2),
\end{align}
where the remainder term is everywhere non-negative, and the implied constant is independent of $\vecw$. As for the free path lengths, we have explicit expressions for these transition kernels in dimensions two and three, which will be discussed in Sections \ref{sec:d2} and \ref{sec:d3}.

The generalization of \eqref{fit} is 
\begin{equation}\label{fitta}
\begin{cases}
\scrD \Psi(\vecx,\vecv,\xi,\vecw)= \int_{\UB}  
\Psi_\vecnull(\vecx,\vecv,\xi,\vecw,\vecz)\,d\vecz & \\
\Psi(\vecx,\vecv,0,\vecw) = \one(\vecx,\vecv) . &
\end{cases}
\end{equation}
Its proof is analogous to \eqref{Cau1}. 

The single-crystal transition kernel satisfies the following invariance properties \cite{partI}:
\begin{equation}
\Phi_{\vecnull}(\xi,\vecz,\vecw) = \Phi_{\vecnull}(\xi,\vecw,\vecz),
\end{equation}
and for all $R\in \O(d-1)$,
\begin{equation}
\Phi_{\vecnull}(\xi,\vecw R,\vecz R) = \Phi_{\vecnull}(\xi,\vecw,\vecz), 
\end{equation}
\begin{equation}
\Phi(\xi,\vecw R) = \Phi(\xi,\vecw), \qquad
\Phi_{\vecnull}(\xi,\vecw R) = \Phi_{\vecnull}(\xi,\vecw).
\end{equation}
These relations imply
\begin{equation}
\Psi_\vecnull(\vecx+\xi\vecv,-\vecv,\xi,\vecz,\vecw) = \Psi_\vecnull(\vecx,\vecv,\xi,\vecw,\vecz),
\end{equation}
and for all $R\in \O(d-1)$,
\begin{equation}
\Psi_\vecnull(\vecx,\vecv,\xi,\vecw R,\vecz R) = \Psi_\vecnull(\vecx,\vecv,\xi,\vecw,\vecz), 
\end{equation}
\begin{equation}
\Psi(\vecx,\vecv,\xi,\vecw R) = \Psi(\vecx,\vecv,\xi,\vecw), \qquad
\Psi_\vecnull(\vecx,\vecv,\xi,\vecw R) = \Psi_\vecnull(\vecx,\vecv,\xi,\vecw).
\end{equation}

\section{Random point processes and the proof of Theorems \ref{freeThm}--\ref{exactpos12}}

We follow the same strategy as in \cite{partI} but use the refined equidistribution theorems for several lattices from \cite{union}. Recall \eqref{PS}, namely
$\scrP_\epsilon = \bigcup_i \big(\scrG_i \cap \epsilon \scrL_i\big)$, with affine lattices $\scrL_i= \epsilon^{-1}\vecx+(\Z^d+\vecomega_i) M_i$. 
Let
\begin{equation}
A_\epsilon = \begin{pmatrix} \epsilon & \vecnull \\ \trans\vecnull & \epsilon^{-1/(d-1)} \one_{d-1} \end{pmatrix} \in\SLR.
\end{equation}
The idea is to consider the sequence of random point processes (with $\epsilon=r^{(d-1)/d}$)
\begin{equation}\label{RPPdef}
\begin{split}
\Theta_\epsilon(\vecx_{\epsilon,r},\vecv) & :=\epsilon^{-1} \big( [\scrP_\epsilon - (\vecx+\epsilon\vecq )]\setminus\{\vecnull\} - r \vecbeta(\vecv) \big) K(\vecv) A_\epsilon
\end{split}
\end{equation}
(where $\vecv$ is distributed according to $\lambda$) and prove convergence, in finite-dimensional distribution, to a random point process as $\epsilon\to 0$. 
Note that the removal of the origin
in \eqref{RPPdef} 
has an effect only when $\vecalf_i:=\vecomega_i-\vecq M_i^{-1}\in\Z^d$ for some $i$;
in fact we have
\begin{align}
\Theta_\epsilon(\vecx_{\epsilon,r},\vecv)
= \bigcup_i \bigg(\epsilon^{-1}(\scrG_i-\vecx_{\epsilon,r}) \cap \big((\Z^d+\vecalf_i\setminus\{\bn\})M_i-\epsilon^{1/(d-1)} \vecbeta(\vecv)\big)\bigg) K(\vecv) A_\epsilon.
\end{align}

Set $G=\ASLR$, $\Gamma=\ASLZ$, and let $\mu$ be the unique $G$-invariant probability measure on $\GamG$.
We let $\Omega$ be the infinite product space $\Omega=\prod_i \GamG$ (one factor for each $\scrG_i$) and let $\omega$ be the corresponding product measure $\prod_i \mu$.
Let us define, for any $(g_i)\in\Omega$ and $\vecv\in\US$,
\begin{equation}
\Theta(\vecx,\vecv,(g_i)):= \bigcup_i\Bigl[\bigl(\bigl((\scrG_i-\vecx)K(\vecv)\cap\R\vece_1\bigr)\times\R^{d-1}\bigr)
\cap\ZZ^d g_i\Bigr].
\end{equation}

Theorems \ref{exactpos1} and \ref{exactpos12} (and thus Theorems \ref{freeThm} and \ref{freeThm2}) follow from the next two theorems by the same steps as in [20, Sections 6 and 9].

\begin{thm}\label{thm:ld1}
Fix $\vecx\in\RR^d$ and, for all $i\in\NN$, let $\scrL_i= \epsilon^{-1}\vecx+(\Z^d+\vecomega_i) M_i$ with $\vecomega_i\in\R^d$, and $M_i\in\SL(d,\R)$ pairwise incommensurable. Fix $\vecq\in\R^d$ so that $\vecomega_i-\vecq M_i^{-1}\notin\QQ^d$. Then for any Borel probability measure 
$\lambda$ on $\S_1^{d-1}$ absolutely 
continuous with respect to $\vol_{\S_1^{d-1}}$, any bounded sets $\scrB_1,\ldots,\scrB_k\subset\RR^d$ 
with boundary of measure zero, and $m_1,\ldots,m_k\in\ZZ_{\geq 0}$,
\begin{multline}
\lim_{\epsilon\to 0} \lambda\big(\big\{  \vecv\in\US : \, \#(\Theta_\epsilon(\vecx_{\epsilon,r},\vecv) \cap\scrB_l )=m_l
\:\: (\forall l=1,\ldots,k) \big\} \big) \\
=\int_{\US}\omega\bigl(\bigl\{(g_i)\in\Omega\col\#(\Theta(\vecx,\vecv,(g_i))\cap\scrB_l)=m_l\:\: (\forall l=1,\ldots,k)
\bigr\}\bigr)\,d\lambda(\vecv).
\end{multline}
\end{thm}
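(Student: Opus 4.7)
The plan is to reduce Theorem \ref{thm:ld1} to the joint equidistribution theorem for several incommensurable lattices proved in \cite{union}. Decompose $\Theta_\eps(\vecx_{\eps,r},\vecv)=\bigcup_i\Theta_\eps^{(i)}(\vecv)$ according to the grain-by-grain formula displayed just after \eqref{RPPdef}. The map $\vecy\mapsto \vecy K(\vecv) A_\eps$ compresses the $\vecv$-direction by $\eps$ and dilates the $(d-1)$ transverse directions by $\eps^{-1/(d-1)}$, so the preimage (in physical coordinates) of the bounded set $\bigcup_l \scrB_l$ lies inside a tube of length $O(1)$ along the ray $\vecx+\R_{\geq 0}\vecv$ and of transverse width $O(r)$. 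By the standing finiteness assumption on the grain collection, there is a $\vecv$-dependent finite index set $I=I(\vecv)\subset\NN$ outside of which $\Theta_\eps^{(i)}$ contributes nothing to any $\scrB_l$ for all $\eps$ small enough. The natural partition of $\US$ according to the itinary $(i_\nu)_\nu$ of $(\vecx,\vecv)$ up to a uniform path length has cells with $\lambda$-null boundary (by absolute continuity of $\lambda$), so it suffices to prove the statement with $I$ fixed.

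\textbf{Parametrisation of grain contributions.} For each $i\in I$, the affine lattice $\bigl((\ZZ^d+\vecalf_i)M_i-\eps^{1/(d-1)}\vecbeta(\vecv)\bigr)K(\vecv)A_\eps$ has covolume one in $\R^d$ and hence equals $\ZZ^d g_i(\vecv,\eps)$ for a unique element $g_i(\vecv,\eps)\in \GamG$. The region $\eps^{-1}(\scrG_i-\vecx_{\eps,r})K(\vecv)A_\eps$ converges pointwise as $\eps\to 0$ to the cylinder
\begin{equation*}
\scrC_i(\vecv):=\bigl((\scrG_i-\vecx)K(\vecv)\cap \R\vece_1\bigr)\times\R^{d-1},
\end{equation*}
because the first coordinate is preserved while the transverse dilation $\eps^{-d/(d-1)}\to\infty$ blows up the transverse extent of the grain to fill $\R^{d-1}$. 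A boundary-layer estimate analogous to \cite[Sect.~6]{partI}, together with $\vol(\partial\scrB_l)=0$, yields for $\lambda$-almost every $\vecv$
\begin{equation*}
\#\bigl(\Theta_\eps^{(i)}(\vecv)\cap\scrB_l\bigr)=\#\bigl(\scrC_i(\vecv)\cap \ZZ^d g_i(\vecv,\eps)\cap\scrB_l\bigr)+o(1).
\end{equation*}

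\textbf{Joint equidistribution and conclusion.} The assumption that the matrices $(M_i)_{i\in I}$ are pairwise incommensurable, combined with $\vecomega_i-\vecq M_i^{-1}\notin\QQ^d$ for all $i\in I$, is precisely the hypothesis under which the main equidistribution theorem of \cite{union} applies: the joint law of $(g_i(\vecv,\eps))_{i\in I}$ under $\vecv\sim\lambda$ converges weakly on $\prod_{i\in I}\GamG$ to the product Haar measure $\prod_{i\in I}\mu$. The counting event $\{\#(\Theta\cap\scrB_l)=m_l\ \forall l\}$ has boundary of $\omega$-measure zero, since its discontinuity locus is contained in the $\omega$-null set where some $\ZZ^d g_i$ meets $\partial\scrB_l\cup\partial\scrC_i(\vecv)$. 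The continuous mapping theorem, applied to the product Haar limit and integrated against $\lambda$, then yields the stated identity.

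\textbf{Main obstacle.} The genuinely deep step is the joint equidistribution on $\prod_{i\in I}\GamG$: for a single grain, equidistribution follows from the mixing / Ratner-theoretic arguments already deployed in \cite{partI}, but joint equidistribution across several grains requires classifying the closed orbits of a diagonally embedded unipotent subgroup acting on $G^{|I|}/\Gamma^{|I|}$, and ruling out any proper invariant algebraic subgroup through which the orbit could factor. Such a subgroup would correspond to a non-trivial commensurability relation among the $M_i$'s, which is exactly what $M_i M_j^{-1}\notin\scrS$ prohibits. Importing this joint equidistribution result from \cite{union} is the heart of the proof; once it is in hand, the remaining steps above are straightforward adaptations of the single-crystal analysis of \cite{partI}.
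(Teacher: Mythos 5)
Your proposal is correct and follows essentially the same route as the paper, whose entire argument is the one-line remark that Theorems \ref{thm:ld1} and \ref{thm:ld2} ``are implied by \cite[Theorem 10]{union} by the same arguments as in \cite[Section 6]{partI}.'' Your localisation to the finitely many grains meeting an $O(1)\times O(r)$ tube around the ray, the identification of each rescaled grain lattice with a point $g_i(\vecv,\epsilon)\in\GamG$, the convergence of the rescaled grain regions to the cylinders $\bigl((\scrG_i-\vecx)K(\vecv)\cap\R\vece_1\bigr)\times\R^{d-1}$, and the appeal to the joint equidistribution theorem of \cite{union} for pairwise incommensurable lattices followed by a boundary-layer and null-boundary (continuous mapping) argument are precisely the intended unpacking of that sentence.
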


Now set $G_0=\SLR$, $\Gamma_0=\SLZ$, 
and let $\mu_0$ be the unique $G_0$-invariant probability measure on $\Gamma_0\backslash G_0$;
then let $\widetilde\Omega^{(j)}=(\Gamma_0\backslash G_0)\times\prod_{i\neq j}\GamG$
and let $\widetilde\omega$ be the corresponding product measure $\mu_0\times\prod_{i\neq j}\mu$.
Finally let us define, for any $(g_i)\in\widetilde\Omega^{(j)}$ and $\vecv\in\US$,
\begin{align}\notag
\widetilde\Theta^{(j)}(\vecx,\vecv,(g_i)):= 
\Bigl[\bigl(\bigl((\scrG_j-\vecx)K(\vecv)\cap\R\vece_1\bigr)\times\R^{d-1}\bigr)\cap
\big((\ZZ^d\setminus\{\bn\}) g_j-(\vecbeta(\vecv)K(\vecv))_\perp\big)\Bigr]
\hspace{20pt}
\\
\cup \:\: \bigcup_{i\neq j}\Bigl[\bigl(\bigl((\scrG_i-\vecx)K(\vecv)\cap\R\vece_1\bigr)\times\R^{d-1}\bigr)
\cap\ZZ^d g_i\Bigr].
\end{align}

\begin{thm}\label{thm:ld2}
Fix $\vecx\in\scrG_j$ for some $j\in\NN$, and, for all $i\in\NN$, let $\scrL_i= \epsilon^{-1}\vecx+(\Z^d+\vecomega_i) M_i$ with $\vecomega_i\in\R^d$, and $M_i\in\SL(d,\R)$ pairwise incommensurable. Fix $\vecq\in\RR^d$, such that $\vecx_\epsilon=\vecx+\epsilon\vecq\in\epsilon\scrL_j$ and such that $\vecomega_i-\vecq M_i^{-1}\notin\QQ^d$ for all $i\neq j$. Then for any Borel probability measure 
$\lambda$ on $\S_1^{d-1}$ absolutely 
continuous with respect to $\vol_{\S_1^{d-1}}$, any bounded $\scrB_1,\ldots,\scrB_k\subset\RR^d$
with boundary of measure zero,
and $m_1,\ldots,m_k\in\ZZ_{\geq 0}$,
\begin{multline}
\lim_{\epsilon\to 0} \lambda\big(\big\{  \vecv\in\US : \, \, \#(\Theta_\epsilon(\vecx_{\epsilon,r},\vecv) \cap\scrB_l
)=m_l\:\: (\forall l=1,\ldots,k) \big\} \big) \\
=\int_{\US}\widetilde\omega\bigl(\bigl\{(g_i)\in\widetilde\Omega^{(j)}\col\#(\widetilde\Theta^{(j)}(\vecx,\vecv,(g_i))\cap\scrB_l
)=m_l\:\: (\forall l=1,\ldots,k)
\bigr\}\bigr)\,d\lambda(\vecv).
\end{multline} 
\end{thm}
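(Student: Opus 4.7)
My plan is to run the argument of \cite[Sec.~5--6]{partI} on a product of homogeneous spaces, with the joint equidistribution theorem for several pairwise incommensurable lattices from \cite{union} playing the role of the single-lattice equidistribution input. I would begin by decomposing
\begin{equation*}
\Theta_\eps(\vecx_{\eps,r},\vecv)=\bigcup_i \Theta_\eps^{(i)}(\vecv)
\end{equation*}
according to the grains, and for each $i$ write $\Theta_\eps^{(i)}(\vecv)=\eps^{-1}(\scrG_i-\vecx_{\eps,r})K(\vecv)A_\eps \cap \ZZ^d g_i^{(\eps)}(\vecv)$, where $g_i^{(\eps)}(\vecv)\in\ASLR$ encodes the linear map $M_iK(\vecv)A_\eps$ together with the translation $(\vecalf_i M_i - \eps^{1/(d-1)}\vecbeta(\vecv))K(\vecv)A_\eps$ with $\vecalf_i=\vecomega_i-\vecq M_i^{-1}$. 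For $i\neq j$ the hypothesis $\vecalf_i\notin\QQ^d$ makes the $i$-th factor generic in the affine setting; for $i=j$ the assumption $\vecx_\eps\in\eps\scrL_j$ forces $\vecalf_j\in\ZZ^d$, so the origin is a lattice point (which we excise), the relevant element lives in $\SLR$, and the residual shift $-\eps^{1/(d-1)}\vecbeta(\vecv)K(\vecv)A_\eps$ survives in the limit as the extra parameter $(\vecbeta(\vecv)K(\vecv))_\perp$ that defines $\widetilde\Theta^{(j)}$.

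Next I would apply the joint equidistribution theorem of \cite{union}: because the $M_i$ are pairwise incommensurable and each $g_i^{(\eps)}(\vecv)$ is driven by the expanding horospherical subgroup $\{K(\vecv)A_\eps\}$, the joint law of $(g_i^{(\eps)}(\vecv))_i$ with $\vecv$ sampled from $\lambda$ converges as $\eps\to 0$ to $\widetilde\omega=\mu_0\times\prod_{i\neq j}\mu$ on $\widetilde\Omega^{(j)}$. The pairwise incommensurability is precisely what decouples the factors and produces the product limit measure; the genericity conditions $\vecalf_i\notin\QQ^d$ for $i\neq j$ give equidistribution in each $\ASL$-factor, while the $\SL$-factor at index $j$ is handled directly since no translation appears there.

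Finally I would convert this equidistribution into finite-dimensional convergence of $\Theta_\eps$ via the portmanteau-type argument of \cite[Sec.~6]{partI}. Convexity and pairwise non-overlap of the grains mean that in the limit each point of $\Theta_\eps\cap\scrB_l$ comes from a single grain, so the joint counting event $\{\#(\Theta_\eps\cap\scrB_l)=m_l \:\forall l\}$ factors through the components of $\widetilde\Omega^{(j)}$. The main technical obstacle I foresee is the usual one: $g\mapsto\#(\ZZ^d g\cap\scrB)$ is unbounded on the cusps of $\GamG$ and $\Gamma_0\backslash G_0$, and the rescaled grain $\eps^{-1}(\scrG_i-\vecx_{\eps,r})K(\vecv)A_\eps$ degenerates to the infinite cylinder $((\scrG_i-\vecx)K(\vecv)\cap\RR\vece_1)\times\RR^{d-1}$, so interchanging the $\eps\to 0$ limit with the counting requires a standard tightness/truncation argument, together with the observation that for $\lambda$-a.e.\ $\vecv$ this cross-section has positive $(d-1)$-dimensional measure. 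Both ingredients are dealt with in \cite[Sec.~6]{partI} for a single lattice and transfer to the product setting with only notational changes, which is why the step from Theorem \ref{thm:ld2} to Theorems \ref{exactpos12} and \ref{freeThm2} in turn reduces to the same computation.
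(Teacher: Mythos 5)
Your proposal is correct and follows essentially the same route as the paper, which disposes of Theorems \ref{thm:ld1} and \ref{thm:ld2} by invoking \cite[Theorem 10]{union} (joint equidistribution for pairwise incommensurable lattices, yielding the product measure $\widetilde\omega$ with the $\SL$-factor at the exceptional index $j$) combined with the counting/tightness arguments of \cite[Section 6]{partI}. Your grain-by-grain decomposition, the surviving perpendicular shift $(\vecbeta(\vecv)K(\vecv))_\perp$ in the $j$-th factor, and the truncation near the cusps are exactly the ingredients the authors have in mind.
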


Theorems \ref{thm:ld1} and \ref{thm:ld2} are implied by \cite[Theorem 10]{union} by the same arguments as in \cite[Section 6]{partI}.

\section{Tail estimates}

We will now show that, unlike the case of single crystals, the distribution of free path lengths, as well as the transition kernels, decay exponentially for large $\xi$. This observation relies on the following bound.

\begin{lem}\label{simpli}
For $\xi\geq 0$,
\begin{equation}
D_\Phi(\xi) \leq \max\big( \e^{-\frac{\sigmabar}{2} \xi},  \e^{-\frac{\zeta(d)}{2}} \big). 
\end{equation}
\end{lem}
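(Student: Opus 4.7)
The plan is to bound the cumulative integral $\int_0^\xi \Phi(\eta)\,d\eta$ from below using the explicit lower bound on $\Phi$ implicit in \eqref{smallxi}, then convert the resulting linear bound on $D_\Phi$ into the exponential bound via the elementary inequality $1-y\leq e^{-y}$.

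First, since the remainder in \eqref{smallxi} is non-negative, we have the global pointwise inequality $\Phi(\eta)\geq\sigmabar-\frac{\sigmabar^2}{\zeta(d)}\eta$ for all $\eta\geq0$. Combined with $\Phi\geq 0$, this yields
\begin{equation*}
\int_0^\xi\Phi(\eta)\,d\eta\;\geq\;\int_0^{\min(\xi,\,\zeta(d)/\sigmabar)}\Bigl(\sigmabar-\frac{\sigmabar^2}{\zeta(d)}\eta\Bigr)d\eta.
\end{equation*}

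Next, I would evaluate the right-hand side in two regimes. For $\xi\leq\zeta(d)/\sigmabar$ the integral equals $\sigmabar\xi\bigl(1-\frac{\sigmabar\xi}{2\zeta(d)}\bigr)$, and the parenthesized factor is $\geq\tfrac12$ throughout this range, giving a lower bound of $\sigmabar\xi/2$. For $\xi\geq\zeta(d)/\sigmabar$ the integral saturates at the value $\zeta(d)/2$ (attained at $\xi=\zeta(d)/\sigmabar$). Hence in all cases
\begin{equation*}
\int_0^\xi\Phi(\eta)\,d\eta\;\geq\;\min\!\Bigl(\tfrac{\sigmabar}{2}\xi,\;\tfrac{\zeta(d)}{2}\Bigr).
\end{equation*}

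Finally, using $D_\Phi(\xi)=1-\int_0^\xi\Phi(\eta)\,d\eta$ and the fact that $D_\Phi(\xi)\geq 0$, I conclude
\begin{equation*}
D_\Phi(\xi)\;\leq\;\max\!\Bigl(1-\tfrac{\sigmabar}{2}\xi,\;1-\tfrac{\zeta(d)}{2}\Bigr)\;\leq\;\max\!\Bigl(e^{-\sigmabar\xi/2},\;e^{-\zeta(d)/2}\Bigr),
\end{equation*}
where the last step applies the universal estimate $1-y\leq e^{-y}$ to each entry. There is no real obstacle here: the only subtle point is to notice that the linear lower bound on $\Phi$ in \eqref{smallxi} becomes vacuous precisely when $\xi=\zeta(d)/\sigmabar$, and this is the same threshold that determines the switch between the two regimes in the max on the right, which is why the stated bound has exactly this form.
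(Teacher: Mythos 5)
Your proof is correct and uses essentially the same ingredients as the paper's: the non-negativity of the remainder in \eqref{smallxi} to get the pointwise lower bound $\Phi(\eta)\geq\max\bigl(\sigmabar-\frac{\sigmabar^2}{\zeta(d)}\eta,0\bigr)$, the evaluation of its integral in the two regimes split at $\xi=\zeta(d)/\sigmabar$, and the inequality $1-x\leq e^{-x}$. The only (immaterial) difference is the order of operations: the paper first writes $D_\Phi(\xi)\leq e^{-\int_0^\xi\Phi}$ and then bounds the integral, whereas you bound the integral first and exponentiate at the end, which makes the final evaluation slightly more explicit than the paper's terse version.
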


\begin{proof}
Since 
$1-x\leq e^{-x}$ for $0\leq x <1$, we have
\begin{equation}
D_\Phi(\xi) \leq \e^{-\int_0^\xi \Phi(\eta)\,d\eta} \leq \e^{-\int_0^\xi \max(\sigmabar-\frac{\sigmabar^2}{\zeta(d)}\eta,0) \,d\eta},
\end{equation}
where the second inequality follows from the positivity of the error term in \eqref{smallxi}. 
\end{proof}

By {\em grain diameter} we mean in the following the largest distance between any two points in a single grain. 
We define the {\em gap function},
$\gap(\vecx,\vecv,\xi)$, to be the total length of the trajectory $\{ \vecx+t\vecv : 0\leq t\leq \xi \}$ that is outside $\cup_j\scrG_j$. Note that $\gap(\vecx,\vecv,\xi)\leq \xi$, and furthermore
\begin{equation}
\gap(\vecx+s\vecv,\vecv,\xi) = \gap(\vecx,\vecv,\xi+s)-\gap(\vecx,\vecv,s)
\end{equation}
for all $s\geq 0$.

\begin{prop}\label{prop:gap}
Assume that all grain diameters are uniformly bounded. Then there are constants $C,\gamma>0$ such that for all $\vecx,\vecv,\xi,\vecw,\vecz$
\begin{equation}\label{eq:gap}
\Psi_\vecnull(\vecx,\vecv,\xi,\vecw,\vecz) \leq C \e^{-\gamma (\xi-\gap(\vecx,\vecv,\xi))}.
\end{equation}
The same bound holds for $\Psi(\vecx,\vecv,\xi,\vecw)$, $\Psi_\vecnull(\vecx,\vecv,\xi,\vecw)$ and $\Psi(\vecx,\vecv,\xi)$.
\end{prop}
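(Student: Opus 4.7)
The plan is to combine two ingredients: the sojourn times $\ell_\mu$ are uniformly bounded by the grain-diameter bound $D$, and Lemma \ref{simpli} already furnishes an exponential bound on $D_\Phi$. The starting observation is the identity
\[
\xi-\gap(\vecx,\vecv,\xi) = \sum_{\mu=1}^{\nu-1}\ell_\mu + (\xi-\ell_\nu^-)\qquad(\ell_\nu^-\leq\xi<\ell_\nu^+),
\]
which simply decomposes the time the trajectory spends inside grains into the complete sojourns through $\scrG_{i_1},\ldots,\scrG_{i_{\nu-1}}$ plus the partial current sojourn of length $\xi-\ell_\nu^-$. For $\xi$ not belonging to any interval $[\ell_\nu^-,\ell_\nu^+)$, i.e.\ when $\xi$ lies in a gap between grains, all four quantities $\Psi_\vecnull(\vecx,\vecv,\xi,\vecw,\vecz)$, $\Psi(\vecx,\vecv,\xi,\vecw)$, $\Psi_\vecnull(\vecx,\vecv,\xi,\vecw)$, $\Psi(\vecx,\vecv,\xi)$ vanish by the ``otherwise'' clauses of \eqref{limid22}, \eqref{limid3}, \eqref{limid2}, \eqref{limid}, so \eqref{eq:gap} is trivially satisfied in that regime.

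Under the hypothesis that all grains have diameter at most $D$, every sojourn obeys $\ell_\mu\leq D$ and also $\xi-\ell_\nu^-\leq\ell_\nu\leq D$. Lemma \ref{simpli} reads $D_\Phi(\eta)\leq\max(e^{-\sigmabar\eta/2},e^{-\zeta(d)/2})$, and for $0\leq\eta\leq D$ we have $e^{-\zeta(d)/2}\leq e^{-\zeta(d)\eta/(2D)}$, whence
\[
D_\Phi(\ell_\mu)\leq e^{-\gamma\ell_\mu},\qquad\gamma:=\min\!\Bigl(\tfrac{\sigmabar}{2},\tfrac{\zeta(d)}{2D}\Bigr)>0,
\]
uniformly in $\mu$. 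Multiplying over the completed sojourns yields $\prod_{\mu=1}^{\nu-1}D_\Phi(\ell_\mu)\leq\exp\bigl(-\gamma\sum_{\mu=1}^{\nu-1}\ell_\mu\bigr)$.

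Feeding this product bound into \eqref{limid22} and invoking the uniform pointwise bounds $\Phi(\eta)\leq\sigmabar$ from \eqref{smallxi}, $\Phi_\vecnull(\eta,\vecw,\vecz)\leq\zeta(d)^{-1}$ from \eqref{PHI0ZEROSMALLTHMRES}, together with the uniform-in-$\vecw$ boundedness of $\Phi(\eta,\vecw)$ and $\Phi_\vecnull(\eta,\vecw)$ (a consequence of the $O(\eta^{-d})$ tail of $\Phi_\vecnull$ recorded in \cite{partIV}), the $\nu\geq 2$ case of \eqref{limid22} gives
\[
\Psi_\vecnull(\vecx,\vecv,\xi,\vecw,\vecz)\leq C_0\exp\!\Bigl(-\gamma\sum_{\mu=2}^{\nu-1}\ell_\mu\Bigr),
\]
and the two missing terms $e^{-\gamma\ell_1}$ and $e^{-\gamma(\xi-\ell_\nu^-)}$ can be inserted into the exponent at the cost of the universal constant $e^{2\gamma D}$ since $\ell_1,\xi-\ell_\nu^-\leq D$; the $\nu=1$ case is immediate from $\Phi_\vecnull\leq\zeta(d)^{-1}$ and $\xi\leq D$. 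The three other quantities $\Psi_\vecnull(\vecx,\vecv,\xi,\vecw)$, $\Psi(\vecx,\vecv,\xi,\vecw)$, $\Psi(\vecx,\vecv,\xi)$ are handled identically via \eqref{limid2}, \eqref{limid3}, \eqref{limid}, with $\Phi_\vecnull(\,\cdot\,,\vecw,\vecz)$ replaced by the appropriate auxiliary function. Finally, extension to $(\vecx,\vecv)\in\bigcup_j\scrH_j$ is automatic, since $\xi-\gap(\vecx,\vecv,\xi)$ is invariant under the shift $\vecx\mapsto\vecx+s\vecv$ by \eqref{shift1}--\eqref{shift2}. The main conceptual step is the upgrade from Lemma \ref{simpli}'s bi-regime bound to the single exponential above via the grain-diameter hypothesis; the one nontrivial input beyond the excerpt itself is the uniform-in-$\vecw$ bound for $\Phi(0,\vecw)$, which has to be quoted from \cite{partIV}.
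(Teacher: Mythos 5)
Your proof is correct and follows essentially the same route as the paper's: the key step, upgrading Lemma \ref{simpli} to $D_\Phi(\ell_\mu)\le \e^{-\gamma\ell_\mu}$ with $\gamma=\min\bigl(\tfrac{\sigmabar}{2},\tfrac{\zeta(d)}{2D}\bigr)$ via the grain-diameter bound and then feeding this into \eqref{limid22}, is exactly the paper's (one-line) argument, and you merely supply the bookkeeping (the decomposition of $\xi-\gap(\vecx,\vecv,\xi)$ into sojourn times, the absorption of the bounded edge factors into $C$) that the paper leaves implicit. One cosmetic point: the upper bounds $\Phi(\eta)\le\sigmabar$ and $\Phi(\eta,\vecw)\le 1$ follow simply from monotonicity in $\eta$ (these are integrals $\int_\eta^\infty$ of non-negative integrands, with values $\sigmabar$ and $1$ at $\eta=0$), not from \eqref{smallxi} (which, with its non-negative remainder, gives a lower bound) nor from any tail estimate in \cite{partIV}.
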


\begin{proof}
If the grain diameters are bounded above by $\ell>0$, we have $\ell_i\leq \ell$ for all $i$, and hence in view of Lemma \ref{simpli},
\begin{equation}
D_\Phi(\ell_i)\leq \e^{-\gamma\ell_i}
\end{equation}
for all $i$, where $\gamma=\min(\frac{\sigmabar}{2},\frac{\zeta(d)}{2\ell})$. The desired bound now follows from \eqref{limid22}.
\end{proof}

Therefore, if $\gap(\vecx,\vecv,\xi)\leq \delta \xi$ for some $\delta\in[0,1)$, we have exponential decay in \eqref{eq:gap} with rate $\gamma(1-\delta)$.

\section{Explicit formulas for the transition kernel in dimension $d=2$}\label{sec:d2}

In dimension $d=2$ we have the following explicit formula for the transition probability \cite{partIII}:
\begin{equation}\label{Xp}	\Phi_\vecnull(\xi,\vecw,\vecz)=\frac{6}{\pi^2}\Upsilon\Bigl(1+\frac{\xi^{-1}-\max(|\vecw|,|\vecz|)-1}{|\vecw+\vecz|}\Bigr)
\end{equation}
with
\begin{equation}
	\Upsilon(x)=
\begin{cases} 
0 & \text{if }x\leq 0\\
x & \text{if }0<x<1\\
1 & \text{if }1\leq x,
\end{cases}
\end{equation}
The same formula was also found independently by Caglioti and Golse \cite{Caglioti10} and by Bykovskii and Ustinov \cite{Bykovskii09}, using different methods based on continued fractions. In particular, for all $\xi\leq \frac12$,
\begin{equation}
\Phi_\vecnull(\xi,\vecw,\vecz)=\frac{6}{\pi^2}  
\end{equation}
which is thus independent of $\vecw,\vecz$. We have furthermore \cite{partIII}, again for all $\xi\leq \frac12$,
\begin{equation}
\Phi_\vecnull(\xi,\vecw)=\frac{12}{\pi^2} ,\qquad
\Phi(\xi,\vecw)=1-\frac{12}{\pi^2} \xi.  
\end{equation}
Recall that in dimension $d=2$, the value $\frac12$ is precisely the mean free path length.

\section{Explicit formulas for the transition kernel in dimension $d=3$}\label{sec:d3}

\begin{figure}
\includegraphics[width=0.45\textwidth]{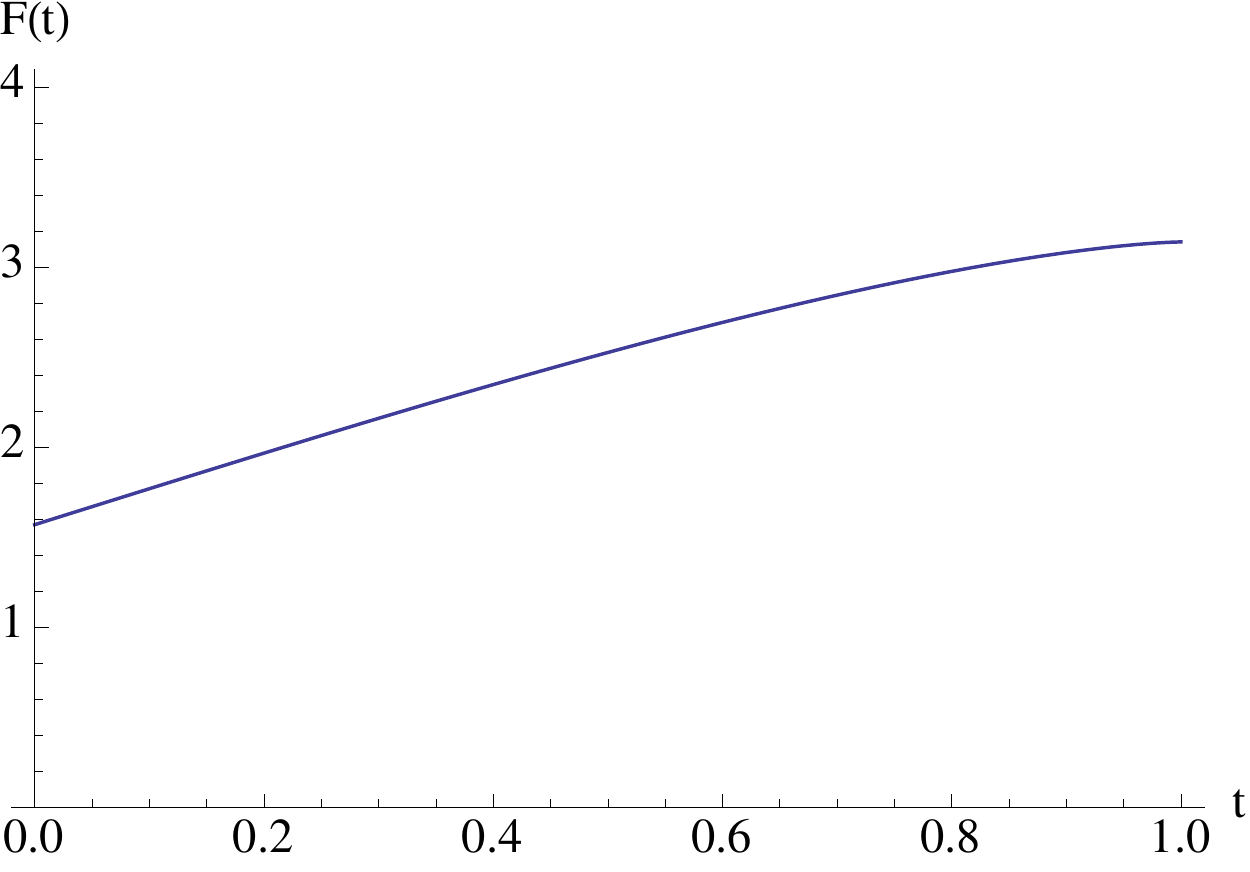}
\includegraphics[width=0.45\textwidth]{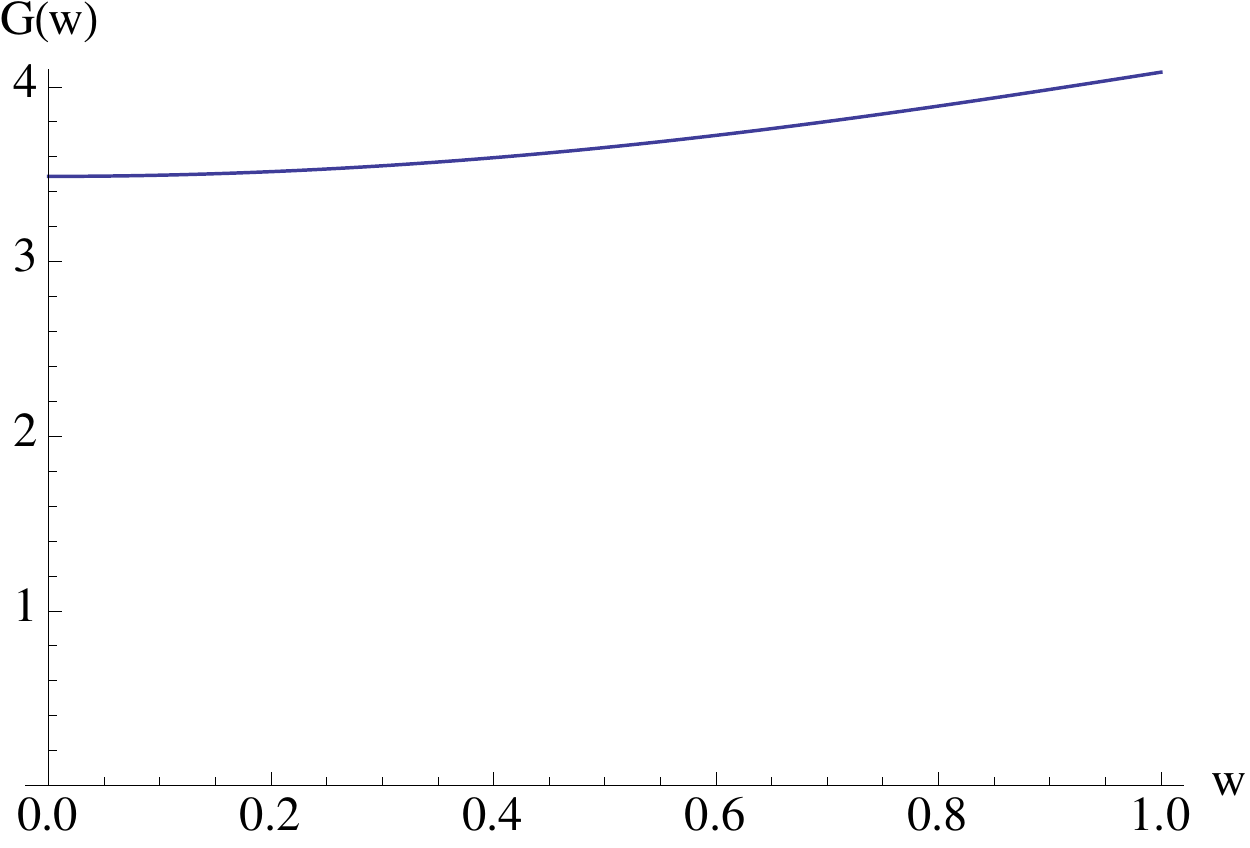}
\caption{The functions $F(t)$ and $G(w)$.} \label{fig2}
\end{figure}

The results in this section are proved in \cite{partIV}. For $0\leq t<1$ set
\begin{equation}\label{FDEF}
F(t)=\pi-\arccos(t)+t\sqrt{1-t^2}
=\Area\bigl(\bigl\{(x_1,x_2)\in\scrB_1^2\col x_1<t\bigr\}\bigr),
\end{equation}
see Fig.~\ref{fig2}.
%
Then, for $0<\xi\leq\frac14$,
\begin{equation}\label{D3EXPLTHMRES}
\Phi_\vecnull(\xi,\vecw,\vecz)=\zeta(3)^{-1}\Bigl(1
-\frac6{\pi^2}F\bigl(\sfrac12\|\vecw-\vecz\|\bigr)\xi\Bigr)
\end{equation}
and
\begin{equation}\label{D3EXPLTHMCOR1RES1}
\Phi(\xi,\vecw)=1-\frac{\pi}{\zeta(3)}\xi
+\frac6{\pi^2\zeta(3)}G(\|\vecw\|)\xi^2,
\end{equation}
where $G:[0,1]\to\R_{>0}$ is the function
\begin{equation}\label{D3EXPLTHMCOR1GDEF}
G(w)=
\pi\int_0^{1-w}F(\sfrac12r)r\,dr
+\int_{1-w}^{1+w}F(\sfrac12r)\arccos\Bigl(\frac{w^2+r^2-1}{2wr}\Bigr)
r\,dr,
\end{equation}
cf.\ Fig.~\ref{fig2}.
The function $G(w)$ is continuous and strictly increasing, and satisfies
$G(0)=\frac{\pi(4\pi+3\sqrt3)}{16}$ and $G(1)=\frac5{16}\pi^2+1$.

\section{The transport equation}\label{sec:GLBE}

In order to prove that the dynamics of a test particle converges, in the Boltzmann-Grad limit $r\to 0$, to a random flight process $(\vecx(t),\vecv(t))$ defined in \eqref{xt}--\eqref{xz}, we require technical refinements of Theorems \ref{exactpos1} and \ref{exactpos12}, where the convergence is uniform over a certain class of $\lambda$. This argument follows the strategy developed in \cite{partII} for single crystals. We will here not attempt to prove these uniform versions, but move straight to the description of the limit process which is determined by the transition kernel of Theorem \ref{exactpos12}.

As in the case of a single crystal \cite{partII}, the limiting random flight process becomes Markovian on an extended phase space, where the additional variables are 
\begin{equation}
\xi(t)=T_{\nu_t+1}-t \in\RR_{>0} \qquad\text{(distance to the next collision)}
\end{equation}
and 
\begin{equation}
\vecv_+(t)=\vecv_{\nu_t+1}\in\US\qquad\text{(velocity after the next collision).}
\end{equation}
The continuous time Markov process $\Xi(t)=(\vecx(t),\vecv(t),\xi(t),\vecv_+(t))$ is determined by the initial distribution $f_0(\vecx,\vecv,\xi,\vecv_+)$ and the collision kernel $p_\vecnull(\vecv_{j-1},\vecx_j,\vecv_j,\xi_{j+1},\vecv_{j+1})$ which yields the probability that the $(j+1)$st collision is at distance $\xi_{j+1}$ from the $j$th collision, with subsequent velocity $\vecv_{j+1}$, given that the $j$th collision takes place at $\vecx_j$ and the particle's velocities before and after this collision are $\vecv_{j-1}$ and $\vecv_j$. The collision kernel $p_\vecnull(\vecv_0,\vecx,\vecv,\xi,\vecv_+)$ is related to the transition kernel $\Psi_\vecnull(\vecx,\vecv,\xi,\vecw,\vecz)$ of the previous sections by
\begin{equation}
	p_\vecnull(\vecv_0,\vecx,\vecv,\xi,\vecv_+) 
	=\Psi_\vecnull(\vecx,\vecv,\xi,\vecb,-\vecs)\,\sigma(\vecv,\vecv_+)
\end{equation}
where $\sigma(\vecv,\vecv_+)$ is the differential cross section, $\vecs=\vecs(\vecv,\vecv_0)$, $\vecb=\vecb(\vecv,\vecv_+)$ are the exit and impact parameters of the previous resp.\ next scattering event.

The density $f_t(\vecx,\vecv,\xi,\vecv_+)$ of the process at time $t>0$ is given by
\begin{equation}
\int_\scrA f_t(\vecx,\vecv,\xi,\vecv_+)\, d\vecx\,d\vecv\,d\xi\,d\vecv_+ = \PP\big(\Xi(t)\in\scrA\big)
\end{equation}
for suitable test sets $\scrA$.
Let us write
\begin{equation}\label{f1}
	f_t(\vecx,\vecv,\xi,\vecv_+) = \sum_{n=0}^\infty f^{(n)}_t(\vecx,\vecv,\xi,\vecv_+)
\end{equation}
where $f^{(n)}_t(\vecx,\vecv,\xi,\vecv_+)$ is the density of particles that have collided precisely $n$ times in the time interval $[0,t]$. Then
\begin{equation}\label{f2}
	f^{(0)}_t(\vecx,\vecv,\xi,\vecv_+) = f_{0}(\vecx-t\vecv,\vecv,\xi+t,\vecv_+),
\end{equation}
and for $n\geq 1$
\begin{multline}\label{f3}
	f^{(n)}_t(\vecx,\vecv,\xi,\vecv_+) = \int_{T_n<t} f_{0}\big(\vecx_0,\vecv_0,\xi_1,\vecv_1\big) \\ \times \prod_{j=1}^n p_\vecnull(\vecv_{j-1},\vecx_j,\vecv_j,\xi_{j+1},\vecv_{j+1}) 	 \, d\xi_n \, d\vecv_{n-1} \cdots d\xi_1\,d\vecv_0  ,
\end{multline} 
with $\vecv_{n+1}=\vecv_+$, $\vecv_n=\vecv$, $\xi_{n+1}=\xi+t-T_n$, $\vecx_0=\vecx-\vecq_n-(t-T_n) \vecv$,
and with $\vecx_j$, $\vecq_n$, $T_n$ as in \eqref{xnqn}.
For general densities $f_{0}(\vecx,\vecv,\xi,\vecv_+)$, relations \eqref{f1}--\eqref{f3} define a family of linear operators (for $t>0$)
\begin{equation}
	K_t^{(n)}f_0(\vecx,\vecv,\xi,\vecv_+):=f^{(n)}_t(\vecx,\vecv,\xi,\vecv_+)
\end{equation}
and
\begin{equation}
	K_t f_0(\vecx,\vecv,\xi,\vecv_+):=f_t(\vecx,\vecv,\xi,\vecv_+).
\end{equation}
One can show that 
\begin{equation}
	\sum_{m=0}^n K_{t_2}^{(n-m)} K_{t_1}^{(m)}=K_{t_1+t_2}^{(n)},
\end{equation}
which in turn implies $K_{t_1+t_2}=K_{t_1} K_{t_2}$, i.e., the operators $K_t$ form a semigroup (reflecting the fact that $\Xi(t)$ is Markovian). The proof of this is analogous to the computation in \cite[Sect.\ 6.2]{partII}.
Hence, for $h>0$ we have $f_{t+h}=K_h f_t$. Since the probability of having more than one collision in a small time interval is negligible, we have for small $h$ (cf.~\cite[Sect.\ 6.2]{partII})
\begin{equation}
	f_{t+h}(\vecx,\vecv,\xi,\vecv_+) = K_h f_t(\vecx,\vecv,\xi,\vecv_+) = K^{(0)}_h f_t(\vecx,\vecv,\xi,\vecv_+) +K^{(1)}_h f_t(\vecx,\vecv,\xi,\vecv_+) +O(h^2).
\end{equation}
Explicitly, we have by \eqref{f2}, \eqref{f3},
\begin{multline}
f_{t+h}(\vecx,\vecv,\xi,\vecv_+) = f_t(\vecx-h\vecv ,\vecv,\xi+h,\vecv_+) \\
+\int_0^h \int_{\US} f_t(\vecx-\xi_1\vecv_0 -(h-\xi_1)\vecv ,\vecv_0,\xi_1,\vecv) \\ \times p_\vecnull(\vecv_0,\vecx-(h-\xi_1)\vecv,\vecv,\xi+h-\xi_1,\vecv_+) \,d\vecv_0 \, d\xi_1 +O(h^2) .   
\end{multline}
Dividing this expression by $h$ and taking the limit $h\to 0$, we obtain 
the Fokker-Planck-Kolmogorov equation (or Kolmogorov backward equation) of the Markov process $\Xi(t)$,
\begin{equation}\label{glB22}
\wD f_t(\vecx,\vecv,\xi,\vecv_+) \\
	= \int_{\S_1^{d-1}}  f_t(\vecx,\vecv_0,0,\vecv) \,
p_\vecnull(\vecv_0,\vecx,\vecv,\xi,\vecv_+) \,
d\vecv_0 ,
\end{equation}
where
\begin{align}
\wD f_t(\vecx,\vecv,\xi,\vecv_+)=
\lim_{\epsilon\to 0_+} \epsilon^{-1}[f_{t+\epsilon}(\vecx+\epsilon\vecv,\vecv,\xi-\epsilon,\vecv_+)-f_t(\vecx,\vecv,\xi,\vecv_+)] .
\end{align}
As for $\scrD$, we observe that $\wD=\partial_t + \vecv\cdot\nabla_\vecx - \partial_\xi$ at any point where the latter operator is well defined (which is the case for a full measure set).
The physically relevant initial condition is
\begin{equation}\label{ini}
	\lim_{t\to 0}f_t(\vecx,\vecv,\xi,\vecv_+) = f_0(\vecx,\vecv,\xi,\vecv_+)= f_0(\vecx,\vecv)\, p(\vecx,\vecv,\xi,\vecv_+) 
\end{equation}
with
	\begin{equation}
	p(\vecx,\vecv,\xi,\vecv_+) := \Psi\big(\vecx,\vecv,\xi,\vecb \big)\, \sigma(\vecv,\vecv_+) .
	\end{equation}
The original phase-space density is recovered via projection,
\begin{equation}
f_t(\vecx,\vecv)=\int_0^\infty \int_{\S_1^{d-1}} f_t(\vecx,\vecv,\xi,\vecv_+)\, d\vecv_+\, d\xi .
\end{equation}
Note that \eqref{fitta} implies that $f_t(\vecx,\vecv,\xi,\vecv_+)=p(\vecx,\vecv,\xi,\vecv_+)$ is the stationary solution of \eqref{glB22}, corresponding to $f_0(\vecx,\vecv)=1$. Uniqueness in the Cauchy problem \eqref{glB22}--\eqref{ini} follows from standard arguments, cf.\ \cite[Section 6.3]{partII}.

The generalized linear Boltzmann equation \eqref{glB22} will also hold for other grainy materials, provided different grains are uncorrelated to guarantee the factorization of the individual grain-distribution functions in \eqref{limid22}. We will discuss the simplest example, grains of a disordered medium, in Section \ref{sec:disorder}. Note that it is not necessary that the transition probabilities $\Phi_\vecnull(\xi,\vecw,\vecz)$ in each grain are identical---the modifications in  \eqref{limid22} are straightforward: replace $\Phi_\vecnull(\xi,\vecw,\vecz)$, $\Phi(\xi,\vecw)$, etc.\ by the grain-dependent $\Phi_\vecnull^{(i_\nu)}(\xi,\vecw,\vecz)$, $\Phi^{(i_\nu)}(\xi,\vecw)$, etc.\ throughout. 

\section{A simplified kernel}\label{sec:simple}

Let us  assume now that the transition kernel is given by a function
\begin{equation}
\Psi_\vecnull(\vecx,\vecv,\xi):=\sigmabar\, \Psi_\vecnull(\vecx,\vecv,\xi,\vecw,\vecz)
\end{equation}
that is independent of $\vecw,\vecz$. As we have seen in Section \ref{sec:d2}, this holds in the two-dimensional setting provided the grain size is less than the mean free path length. 
Then of course also $\Psi(\vecx,\vecv,\xi,\vecw)$ is independent of $\vecw$ and related to the distribution of free path lengths by
\begin{equation}
\Psi(\vecx,\vecv,\xi) = \sigmabar\, \Psi(\vecx,\vecv,\xi,\vecw) .
\end{equation}
Relation \eqref{fitta} becomes \eqref{Cau1}; that is
\begin{equation}\label{three8}
\begin{cases}
\scrD \Psi(\vecx,\vecv,\xi)= \sigmabar\, \Psi_\vecnull(\vecx,\vecv,\xi) & \\
\Psi(\vecx,\vecv,0) = \sigmabar\, \one(\vecx,\vecv) . &
\end{cases}
\end{equation}
The ansatz
\begin{equation}
f_t(\vecx,\vecv,\xi,\vecv_+) = \sigmabar^{\,-1} g_t(\vecx,\vecv,\xi)\, \sigma(\vecv,\vecv_+)
\end{equation}
reduces equation \eqref{glB22} to 
\begin{equation}\label{glB22red}
\wD	 g_t(\vecx,\vecv,\xi) \\
	= \sigmabar^{\,-1}\, \Psi_\vecnull(\vecx,\vecv,\xi) \int_{\S_1^{d-1}}  g_t(\vecx,\vecv_0,0) \,  \sigma(\vecv_0,\vecv)\, d\vecv_0 
\end{equation}
with initial condition 
\begin{equation}
	\lim_{t\to 0}g_t(\vecx,\vecv,\xi) = f_0(\vecx,\vecv)\, \Psi(\vecx,\vecv,\xi) .
\end{equation}
The stationary solution of \eqref{glB22red} corresponding to $f_0(\vecx,\vecv)=1$ is $g_t(\vecx,\vecv,\xi) =\Psi(\vecx,\vecv,\xi)$.

\section{Disordered grains}\label{sec:disorder}

It is instructive to contrast the case of crystal grains discussed above with grains consisting of a disordered medium. We model the medium by scatterers centred at a fixed realisation of a Poisson point process $\scrL_{\text{Poisson}}$ with intensity $1$, rescale by $\epsilon$ and intersect with the grains as in \eqref{PS} to produce the point set
\begin{equation}\label{PS2}
\scrP_\epsilon = \bigcup_i \big( \scrG_i \cap \epsilon \scrL_{\text{Poisson}}\big) .
\end{equation}
If there are no gaps between the grains, $\scrP_\epsilon$ is precisely a fixed realisation of a Poisson process with intensity $1$, for which the convergence to the linear Boltzmann equation has been established in \cite{Boldrighini83}. There do not seem to be any technical obstructions in extending these results to the setting with gaps. In particular, the above results for crystal grains remain valid, if we replace the relevant single-crystal distributions by their disordered counterparts (cf.~\cite{icmp,ICM2014}): For the transition kernels, we have
\begin{equation}
\Phi_\vecnull(\xi,\vecw,\vecz)=\e^{-\sigmabar \xi},\qquad  \Phi(\xi,\vecw)=\e^{-\sigmabar \xi},
\end{equation}
and for the distribution of free path lengths
\begin{equation}
\Phi_\vecnull(\xi,\vecw)=\sigmabar \e^{-\sigmabar \xi}, \qquad
\Phi(\xi)=\sigmabar \e^{-\sigmabar \xi}, \qquad D_\Phi(\xi)=\e^{-\sigmabar \xi}.
\end{equation}
Thus in the case of a disordered granular medium, the formulas 
\eqref{limid}, \eqref{limid2}, \eqref{limid3} and \eqref{limid22} become
\begin{equation}\label{limidgran}
\Psi(\vecx,\vecv,\xi,\vecw)
=\sigmabar^{\,-1}\,\Psi(\vecx,\vecv,\xi)
= \e^{-\sigmabar (\xi-\gap(\vecx,\vecv,\xi))}  \one(\vecx+\xi\vecv,\vecv),
\end{equation}
and
\begin{equation}\label{limid2gran}
\Psi_\vecnull(\vecx,\vecv,\xi,\vecw,\vecz)
= \sigmabar^{\,-1}\, \Psi_\vecnull(\vecx,\vecv,\xi,\vecw)= \e^{-\sigmabar (\xi-\gap(\vecx,\vecv,\xi))}   \one(\vecx,\vecv) \one(\vecx+\xi\vecv,\vecv).
\end{equation} 
The kernel $\Psi_\vecnull(\vecx,\vecv,\xi):=\sigmabar \, \Psi_\vecnull(\vecx,\vecv,\xi,\vecw,\vecz)$ in \eqref{limid2gran} is evidently independent of $\vecw,\vecz$, and thus we are in the setting of Section \ref{sec:simple}. We have 
\begin{equation}
\Psi_\vecnull(\vecx,\vecv,\xi) = \Psi(\vecx,\vecv,\xi) \one(\vecx,\vecv),
\end{equation}
and the ansatz $g_t(\vecx,\vecv,\xi)=f_t(\vecx,\vecv)\Psi(\vecx,\vecv,\xi)$ reduces the generalised Boltzmann equation \eqref{glB22red} to the classical density-dependent linear Boltzmann equation
\begin{equation}\label{glB22red3}
	\big[ \partial_t + \vecv\cdot\nabla_\vecx \big] f_t(\vecx,\vecv) \\
	= \one(\vecx,\vecv) \int_{\S_1^{d-1}}  \big[f_t(\vecx,\vecv_0)-f_t(\vecx,\vecv)\big] \,  \sigma(\vecv_0,\vecv)\, d\vecv_0 
\end{equation}
with initial condition 
\begin{equation}
	\lim_{t\to 0}f_t(\vecx,\vecv) = f_0(\vecx,\vecv) .
\end{equation}

\end{document}